\renewcommand{\thesection}{\arabic{section}}
\makeatletter \@addtoreset{equation}{section} \makeatother
\newtheorem{proposition}{Proposition}
\newtheorem{theorem}{Theorem}
\newtheorem{lemma}{Lemma}
\newtheorem{remark}{Remark}
\renewcommand{\theequation}{\thesection.\arabic{equation}}
\def\mdet{\mathrm{det}}
\def\intd{\displaystyle\int}
\def\Tr{\mathrm{Tr}\,}
\def\Item$#1${\item $\displaystyle#1$
   \hfill\refstepcounter{equation}(\theequation)}
\begin{document}


\title{Universality of the second mixed moment of the characteristic polynomials of the 1D band matrices:
real symmetric case}

\author{Tatyana Shcherbina
 \\ \\
 \small{Chebyshev Laboratory, Mathematical Department of St.Petersburg State University,}\\
\small{14th Line, 29b, Saint Petersburg, 199178 Russia}
\\ \small{tshcherbi@gmail.com}
}

\date{\today}

\maketitle
\begin{abstract}
We prove that the asymptotic behavior of the second mixed moment of the characteristic polynomials of the
$N\times N$ 1D Gaussian real symmetric band matrices with the width of the band $W\gg N^{1/2}$ coincides with those for the Gaussian
Orthogonal Ensemble (GOE). Here we adapt the approach of \cite{TSh:14}, where 
 the Hermitian case was considered.
\end{abstract}

\section{Introduction}
The paper is the continuation of  \cite{TSh:14} to which we will frequently refer in this paper.
In \cite{TSh:14} we proved that the asymptotic behavior of the second mixed moment of the characteristic polynomials of the
1D Gaussian Hermitian random band matrices with $W\gg N^{1/2}$ coincides with those for the Hermitian
random matrices with i.i.d. (modulo symmetry) Gaussian random entries (GUE). The convenient
integral representation for the second correlation function of the characteristic polynomials
was obtained there by using the supersymmetry techniques (SUSY).
The SUSY method is widely used in the physics literature (see, e.g., \cite{Ef, M:00}) and is
potentially very powerful but the rigorous
control of the
integral representations, which can be obtained by this method, is difficult.
So far the most of rigorous results obtained by using the SUSY approach concern the case of  unitary symmetry. 
The goal of this paper is to show that the SUSY approach can be applied to  the case of the orthogonal symmetry as well, as to the unitary one.

We consider the real symmetric $N\times N$ matrices $H_N$ 
 (we enumerate indices of entries by $i,j\in \mathcal{L}$, where
$\mathcal{L}=[-n,n]^d\cap \mathbb{Z}^d$, $N=(2n+1)^d$) whose entries $H_{ij}$ are random
real Gaussian variables with mean zero such that
\begin{equation}\label{ban}
\mathbf{E}\big\{ H_{ij}H_{lk}\big\}=(\delta_{ik}\delta_{jl}+\delta_{il}\delta_{jk})J_{ij},\quad -n\le i, j, k, l\le n,
\end{equation}
where
\begin{equation}\label{J}
J_{ij}=\left(-W^2\Delta+1\right)^{-1}_{ij},
\end{equation}
and $\Delta$ is the discrete Laplacian on $\mathcal{L}$ with Neumann boundary conditions (cf. \cite{TSh:14}, eq. (1.1) -- (1.2)).
Note that  the variance of matrix elements $J_{ij}$ is exponentially small when $|i-j|\gg W$, and so  $W$ can be considered as the width of the band.
In this paper we will focus on the one-dimensional case ($d=1$).

The probability law of real symmetric 1D RBM  can be written in the form
\begin{equation}\label{band}
P_N(d H_N)=\prod\limits_{-n\le i<j\le n}\dfrac{dH_{ij}}{\sqrt{2\pi J_{ij}}}e^{-\frac{H_{ij}^2}{2J_{ij}}}
\prod\limits_{i=-n}^n\dfrac{dH_{ii}}{\sqrt{4\pi J_{ii}}}e^{-\frac{H_{ii}^2}{4J_{ii}}}.
\end{equation}
Let $\lambda_1^{(N)},\ldots,\lambda_N^{(N)}$ be the eigenvalues of
$H_N$. Define their Normalized Counting Measure
(NCM) as
\begin{equation} \label{NCM}
\mathcal{N}_N(\sigma)=\sharp\{\lambda_j^{(N)}\in
\sigma,j=1,\ldots,N \}/N,\quad \mathcal{N}_N(\mathbb{R})=1,
\end{equation}
where $\sigma$ is an arbitrary interval of the real axis.
The behavior of $\mathcal{N}_N$ as $N\to\infty$ was studied for many ensembles.
For 1D RBM it was shown
in \cite{BMP:91, MPK:92} that $\mathcal{N}_{N}$ converges weakly, as $N,W\to\infty$, to a non-random measure
$\mathcal{\mathcal{N}}$, which is called the limiting NCM of the ensemble. The measure $\mathcal{N}$ is absolutely continuous
and its density $\rho$ is given by the well-known Wigner semicircle law :
\begin{equation}\label{rho_sc}
\rho(\lambda)=\dfrac{1}{2\pi}\sqrt{4-\lambda^2},\quad \lambda\in[-2,2].
\end{equation}
Random band matrices (RBM) are natural intermediate models
to study eigenvalue statistics
and quantum propagation in disordered systems as they interpolate between mean-field Wigner
matrices (hermitian or real symmetric matrices with i.i.d.  random entries) and random Schr$\ddot{\hbox{o}}$dinger operators, 
where only a random one-site potential
is present in addition to the Laplacian on a regular box in $\mathbb{Z}^d$. In particular, 
RBM can be used to model the Anderson metal-insulator phase transition.

Let $\ell$ be the localization length,
which describes the typical length scale of the eigenvectors of random matrices.
The system is called delocalized if $\ell$ is
comparable with the matrix size, and it is called localized otherwise.
Delocalized systems correspond to electric
conductors, and localized systems are insulators.

According to the physical conjecture (see \cite{Ca-Co:90, FM:91}) for 1D RBM the expected order of $\ell$ is
 $W^2$ (for the energy in the bulk of the spectrum), which means that varying $W$
we can see the crossover: for $W\gg \sqrt{N}$ the eigenvectors are expected to be
delocalized and for $W\ll \sqrt{N}$ they are localized. At the present time only some upper and lower
bounds for $\ell$ are proven rigorously. It is known from the paper \cite{S:09} that $\ell\le W^8$.
On the other side, in the  papers \cite{EK:11, Yau:12} it was proven first that $\ell\gg W^{7/6}$,
and then that $\ell\gg W^{5/4}$.

The questions of the order of the localization length are closely related to the universality conjecture
of the bulk local regime of the random matrix theory (see \cite{TSh:14} for more details).
%
In this language the conjecture about the crossover for real symmetric 1D RBM states that we get the same local eigenvalue statistics in the bulk of the spectrum
as for GOE 
(real symmetric matrices with i.i.d Gaussian entries) for $W\gg \sqrt{N}$ (which corresponds to delocalized states), and
we get another behavior, which is determined by the Poisson statistics, for $W\ll \sqrt{N}$
(and corresponds to localized states). For the general real symmetric Wigner matrices (i.e., $W=N$) the
bulk universality has been proved in \cite{EYY:10}, \cite{TV:11}. However, in the general case
of RBM the question of bulk universality of local spectral statistics
is still open even for $d=1$.

In this paper we consider  the
correlation functions (or the mixed moments) of characteristic polynomials, which can be defined as
\begin{equation}\label{F}
F_{2k}(\Lambda)=\intd \prod\limits_{s=1}^{2k}\mdet(\lambda_s-H_N)P_n(d\,H_N),
\end{equation}
where $P_n(d\,H_N)$ is defined in (\ref{band}),
and $\Lambda=\hbox{diag}\,\{\lambda_1,\ldots,\lambda_{2k}\}$ are real or complex parameters
that may depend on $N$.  Although $F_{2k}(\Lambda)$ is not a local object, it is also expected 
to be universal in some sense. Moreover, correlation functions of characteristic polynomials are expected to exhibit a crossover which is similar to that
of local eigenvalue statistic. In particular, for  the real symmetric 1D RBM they are expected to have the same local behavior 
as for GOE for $W\gg \sqrt{N}$, and the different behavior for $W\ll \sqrt{N}$.

As was mentioned before, an additional source of motivation for the current work is the development
of the SUSY approach in the context of random operators with non-trivial spatial structures.
Although in the case of RBM (and some related types of the Wegner models) the SUSY method has
been applied rigorously so far mostly to the density of states (see \cite{Con:87}, \cite{DPS:02}),
the result of \cite{TSh:14_1} for the second correlation function of the block-band matrices
gives hope that the method can be applied also for $R_k$. From the SUSY point of view
characteristic polynomials correspond to the so-called fermionic sector of the supersymmetric full model,
which describes the correlation functions $R_k$. So the analysis of the local regime of correlation functions
of the characteristic polynomial is an important step
towards the proof of the universality of the correlation functions $R_k$ for the case of real symmetric
1D RBM.

The asymptotic local behavior in the bulk of the spectrum of the $2k$-point mixed moment
for GOE is known. It was proved for $k=1$ by Br$\acute{\hbox{e}}$zin and Hikami
\cite{Br-Hi:01}, who
used the SUSY approach, and for general $k$ by Borodin and Strahov \cite{BorSt:06}, who used different
techniques, that
\begin{equation*}
F_{2k}\left(\Lambda_0+\hat{\xi}/N\rho(\lambda_0)\right)
=C_N\dfrac{\hbox{Pf}\,
\big\{DS(\pi(\xi_i-\xi_j))
\big\}_{i,j=1}^{2k}}{\triangle(\xi_1,\ldots,\xi_{2k})}(1+o(1)),
\end{equation*}
where
\begin{equation}\label{dS}
DS(x)=-\dfrac{3}{x}\dfrac{d}{dx}\dfrac{\sin x}{x}=3\Big(\dfrac{\sin x}{x^3}-
\dfrac{\cos x}{x^2}\Big),
\end{equation}
$\triangle(\xi_1,\ldots,\xi_k)$ is the
Vandermonde determinant of $\xi_1,\dots, \xi_k$, and $$\hat{\xi}=\hbox{diag}\,\{\xi_1,\ldots,\xi_{2k}\},\quad
\Lambda_0=\lambda_0\cdot I.$$
In particular, for $k=1$ we have
\begin{equation*}
F_{2}\left(\Lambda_0+\hat{\xi}/N\rho(\lambda_0)\right)
=C_N\Big(\dfrac{\sin (\pi (\xi_1-\xi_2))}{\pi^3 (\xi_1-\xi_2)^3}-
\dfrac{\cos (\pi (\xi_1-\xi_2))}{\pi^2 (\xi_1-\xi_2)^2}\Big)(1+o(1)),
\end{equation*}
The last formula was proved also  for real symmetric Wigner and general sample covariance matrices
(see \cite{Kos:09}).

In this paper we obtain the same result for $k=1$ for matrices (\ref{ban})
as $N,W\to\infty$, $W^2=N^{1+\theta}$, $0<\theta\le 1$ (i.e., $W\gg \sqrt{N}$).

Set
\begin{equation*}
\lambda_j=\lambda_0+\dfrac{\xi_j}{N\rho(\lambda_0)},\quad j=1,2,
\end{equation*}
where $N=2n+1$, $\lambda_0\in (-2,2)$, $\rho$ is defined in (\ref{rho_sc}), and
$\{\xi_1,\xi_2\}$ are real parameters varying in any compact
set $K\subset \mathbb{R}$, and define
\begin{equation}\label{D_2}
D_2=\prod\limits_{l=1}^2F_2^{1/2}\Big(\lambda_0+\dfrac{\xi_l}{N\rho(\lambda_0)},\lambda_0+
\dfrac{\xi_l}{N\rho(\lambda_0)}\Big).
\end{equation}

 The main result of the paper is the following theorem :
\begin{theorem}\label{thm:1}
Consider the random matrices (\ref{ban}) -- (\ref{band}) with $W^2=N^{1+\theta}$, where $0<\theta\le 1$.
Define the second mixed moment $F_2$ of the characteristic polynomials as in (\ref{F}). Then we have
\begin{equation}\label{lim1}
\lim\limits_{N\to\infty}
D_2^{-1}F_{2}\Big(\Lambda_0+\hat{\xi}/(N\rho(\lambda_0))\Big)
=3\Big(\dfrac{\sin (\pi (\xi_1-\xi_2))}{\pi^3 (\xi_1-\xi_2)^3}-
\dfrac{\cos (\pi (\xi_1-\xi_2))}{\pi^2 (\xi_1-\xi_2)^2}\Big),
\end{equation}
and the limit is uniform in $\xi_1, \xi_2$ varying in any compact set $K\subset\mathbb{R}$. Here
$\rho(\lambda)$ and $D_2$ are defined in (\ref{rho_sc}) and (\ref{D_2}),
$\Lambda_0=\mathrm{diag}\,\{\lambda_0,\lambda_0\}$,
$\lambda_0\in (-2,2)$, $\hat{\xi}=\mathrm{diag}\,\{\xi_1,\xi_2\}$.
\end{theorem}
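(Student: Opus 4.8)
The plan is to follow the supersymmetric strategy of \cite{TSh:14}, adapting it to the orthogonal symmetry class. The starting point is an exact integral representation for $F_2(\Lambda)$: since the characteristic polynomial $\mdet(\lambda-H_N)$ for a real symmetric matrix can be written as a Gaussian integral over \emph{anticommuting} (Grassmann) variables, the product of $2k=2$ determinants over the $N$ sites introduces at each site $j\in\mathcal{L}$ a collection of Grassmann variables. After integrating out the Gaussian randomness of $H_N$ using (\ref{ban}), the quartic Grassmann terms are decoupled by a Hubbard--Stratonovich transformation; because of the real symmetric structure the relevant auxiliary variables live not in $\mathbb{C}$ (as in the Hermitian/GUE case) but in the space of real symmetric $2\times 2$ matrices, so one gets a family $\{X_j\}_{j\in\mathcal{L}}$ of such matrices, one per site, with a Gaussian weight in the coupling $J^{-1}$. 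The outcome is a representation of the form
\begin{equation*}
F_2(\Lambda)=C_N\intd \prod_{j}dX_j\,\exp\Big\{-\tfrac{1}{2}\sum_{i,j}(J^{-1})_{ij}\,\Tr X_iX_j +\sum_j \log\mdet\big(\text{(linear in }X_j\text{ and }\Lambda)\big)\Big\},
\end{equation*}
i.e. a nearest-neighbour statistical-mechanics model on the chain $\mathcal{L}$ with a $3$-dimensional ``spin'' $X_j$ at each site (the symmetric $2\times 2$ matrix). First I would record this representation carefully, checking the combinatorial constants and the precise form of the single-site ``potential'' $\log\mdet$, and normalizing by $D_2$ so that the spectral parameters enter only through $\hat\xi$.

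The second block is the saddle-point / transfer-operator analysis. Since $W^2\gg N$, the coupling $J^{-1}\approx 1 - W^2\Delta$ is ``stiff'': the energy $\tfrac12\sum (J^{-1})_{ij}\Tr X_iX_j$ strongly penalizes spatial variation of $X_j$, so the integral localizes near spatially constant configurations $X_j\equiv X$. The constant-mode problem is governed by $-\tfrac12\Tr X^2 + \log\mdet(\cdots)$, whose stationary points I expect to be $X=a_\pm I$ with $a_\pm=(\lambda_0\mp i\sqrt{4-\lambda_0^2})/2$ the two semicircle roots, together with the ``saddle manifold'' generated by the orthogonal symmetry $X\mapsto O^TXO$, $O\in O(2)$. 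One then diagonalizes the quadratic form in the fluctuations $X_j-X_*$ around the relevant saddle, separates the longitudinal (massive) modes from the transverse/Goldstone (massless) modes on the saddle manifold, and integrates out the massive fluctuations perturbatively; the error terms are controlled exactly as in \cite{TSh:14} using that $W\gg\sqrt N$ makes the fluctuation contributions subdominant. The surviving integral over the $O(2)$ orbit, together with the two choices of $a_\pm$ at each of the $N$ sites organized along the chain, reduces to a transfer-matrix computation on a compact symmetric space.

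Concretely, the transfer operator acts on functions on the relevant saddle manifold (for the orthogonal class this is a circle, roughly parametrized by an angle), its dominant eigenvalue gives the leading exponential and the correction from the next eigenvalues produces, after the rescaling $\lambda_j=\lambda_0+\xi_j/(N\rho(\lambda_0))$, the oscillatory kernel. The sine-and-cosine combination in (\ref{dS}) — as opposed to the pure $\sin x/x$ of the GUE case — arises precisely because the orthogonal Goldstone manifold contributes an extra ``angular'' integration, which produces the $\tfrac{d}{dx}$ acting on $\sin x/x$; I would identify $DS(\pi(\xi_1-\xi_2))$ with the $k=1$ Pfaffian structure recalled from \cite{BorSt:06, Br-Hi:01} and match constants. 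Uniformity in $\xi_1,\xi_2$ on a compact $K$ follows since all estimates are uniform in the bounded parameter $\hat\xi$ and in $\lambda_0$ away from $\pm2$.

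The main obstacle I expect is the rigorous control of the saddle-point analysis in the orthogonal case: unlike the Hermitian case, the auxiliary matrices $X_j$ are real symmetric, the relevant saddle manifold and its Jacobian are different, and one must handle the non-compact directions of the $X_j$-integration (the ``hyperbolic'' pieces that appear for the real symmetric class) and show they do not spoil convergence — in SUSY language, the bosonic sector here is genuinely noncompact. Equivalently, within the purely fermionic/characteristic-polynomial computation the analogue difficulty is that the single-site determinant $\log\mdet(\cdots)$ is a more intricate function of the $3$-dimensional spin $X_j$, so controlling its Taylor expansion around the saddle manifold uniformly, and proving that the contributions of the two saddles $a_\pm$ along the chain decouple up to exponentially small errors when $W\gg\sqrt N$, is the technical heart of the argument. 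Everything else — the exact integral representation, the reduction to a transfer operator, and the final identification of $DS$ — should go through along the lines of \cite{TSh:14} with the obvious orthogonal modifications.
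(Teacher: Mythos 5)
Your proposal correctly identifies the overall strategy (Grassmann representation, Hubbard--Stratonovich, saddle-point / transfer-operator reduction, matching the $DS$ kernel), but it misidentifies the dual structure in a way that would derail the whole computation. After decoupling the quartic Grassmann terms for a \emph{real symmetric} $H_N$, the auxiliary field at each site is \emph{not} a real symmetric $2\times 2$ matrix diagonalized by $O(2)$: because $H$ lacks a complex-conjugation constraint, the quartic interaction also produces ``anomalous'' terms of the form $\overline\psi_{j1}\overline\psi_{j2}\psi_{k1}\psi_{k2}$, and one needs a total of six real HS variables per site, organized as the quaternion self-dual $4\times 4$ matrix $F_j$ of (\ref{F_j}). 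These matrices have two doubly degenerate real eigenvalues and are diagonalized by $Sp(2)$, not $O(2)$. This is the standard fermion--fermion SUSY duality for the GOE class: the compact dual group is symplectic. Consequently the saddle manifold you would need to integrate over is $\mathring{Sp}(2)=Sp(2)/(Sp(1)\times Sp(1))$, a $4$-dimensional manifold, not a circle; the Itzykson--Zuber--type formula that produces the kernel is the symplectic one of Proposition \ref{p:Its-Z}(ii), eq.\ (\ref{Its-Zub_s1}), and the $3(\sin x/x^3-\cos x/x^2)$ combination drops out of that formula, not out of a ``$d/dx$ from an extra angle'' heuristic.

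Your paragraph on the ``main obstacle'' also points in the wrong direction: you worry about noncompact/hyperbolic bosonic directions, but in the purely fermionic object $F_2$ (a product of determinants, no inverse determinants) the dual integration manifold is entirely compact — there is no bosonic sector and hence no noncompact issue. The real technical difficulty, which the paper addresses in Sections 5--6, is controlling the $\mathring{Sp}(2)$ angular integration: expanding the ``bad'' coupling term in the angular variables, producing majorants for the resulting analytic functions (Lemmas \ref{l:maj}, \ref{l:un}, \ref{l:phi}), and showing that the contribution of the angular block reduces to the constant mode up to errors that vanish for $W\gg\sqrt N$. With the correct quaternion/symplectic identification made, the rest of your outline — splitting into $\Omega_\delta$ and its complement, identifying the two saddle values $a_\pm$, reducing to a transfer operator on the angular manifold, and matching the Pfaffian kernel — is aligned with the paper's argument.
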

Theorem \ref{thm:1} is similar to the main Theorem 1 of \cite{TSh:14}.

The paper is organized as follows. In Sec. $2$ we obtain a convenient integral
representation for $F_{2}$,
using the integration
over the Grassmann variables.
In Sec. $3$ we give the sketch of the proof of Theorem
\ref{thm:1}. Sec. $4$  repeats some auxiliary results of \cite{TSh:14} needed for the proof.
In Sec. $5$ we prove Theorem \ref{thm:1}, applying the steepest descent method
to the integral representation. Sec. 6 is devoted to the proofs of the auxiliary statements.

\subsection{Notation}
We denote by $C$, $C_1$, etc. various $W$ and $N$-independent quantities below, which
can be different in different formulas. Integrals
without limits denote the integration (or the multiple integration) over the whole
real axis, or over the Grassmann variables.

Moreover,
\begin{itemize}
\item $N=2n+1;$

\item $\mathbf{E}\big\{\ldots\big\}$ is an expectation with respect to the measure (\ref{band});

\item $U_\varepsilon(x)=(x-\varepsilon,x+\varepsilon)\subset \mathbb{R};$

\Item
$\hfill
 a_{\pm}=\pm\dfrac{\sqrt{4-\lambda_0^2}}{2}=\pm\pi\rho(\lambda_0),\quad \overline{a}_\pm
=(a_\pm,\ldots,a_\pm)\in \mathbb{R}^N,\hfill
$\label{a_pm}

where $\rho$ is defined in (\ref{rho_sc});

\Item
 $
\hfill \sigma=\left(\begin{array}{cc}
0&1\\
-1&0
\end{array}\right),\quad
\sigma'=\left(\begin{array}{cc}
0&1\\
1& 0
\end{array}\right); \hfill
$\label{s-s'}

\item
$\hfill
\Lambda_0=\left(\begin{array}{cc}
\lambda_0&0\\
0& \lambda_0
\end{array}\right),\quad
\Lambda=\left(\begin{array}{cc}
\lambda_1&0\\
0& \lambda_2
\end{array}\right),\quad
\hat{\xi}=\left(\begin{array}{cc}
\xi_1&0\\
0& \xi_2
\end{array}\right),\quad
L=\left(\begin{array}{cc}
a_+&0\\
0& a_-
\end{array}\right); \hfill
$
\Item
$ \hfill
\Lambda_{0,4}=\left(\begin{array}{cc}
\Lambda_0&0\\
0& \Lambda_0
\end{array}\right),\quad
\hat{\xi}_4=\left(\begin{array}{cc}
\hat{\xi}&0\\
0& \hat{\xi}
\end{array}\right),\quad
L_4=\left(\begin{array}{cc}
L&0\\
0& L
\end{array}\right); \hfill
$
\label{xi_4}

\item $\mathring{U}(2)=U(2)/\big(U(1)\times U(1)\big)$, \quad $\mathring{Sp}(2)=Sp(2)/\big(Sp(1)\times Sp(1)\big)$ 

\item $d\mu$ is the Haar measure on $\mathring{U}(2)$, $d\nu$ is the Haar measure on $\mathring{Sp}(2)$;

\Item
$
 f(x)=(x+i\lambda_0/2)^2/2-\log(x-i\lambda_0/2)$\label{f}
 
$f_*(x)=\Re(f(x)-f(a_\pm))=\big(x^2-\lambda_0^2/4-\log(x^2+\lambda_0^2/4)\big)/2- \Re f(a_\pm);
$

\item $\Omega_\delta$ is a union of
\begin{align} \label{Om_delta}
 \Omega^+_\delta&=\{ \{a_j\}, \{b_j\}: a_j, b_j\in U_\delta(a_+) \,\,\forall j\},\\ \notag
 \Omega^-_\delta&=\{ \{a_j\}, \{b_j\}: a_j, b_j\in U_\delta(a_-)\,\,\forall j\},\\ \notag
\Omega^\pm_\delta&=\{ \{a_j\}, \{b_j\}: (a_j\in U_\delta(a_+),\,\, b_j\in U_\delta(a_-))\\ \notag
&\quad\quad\quad\quad\quad\quad\quad\quad\quad\quad\quad\quad\quad\,\,\hbox{or}\,\,
(a_j\in U_\delta(a_-),\, b_j\in U_\delta(a_+))\,\,\forall j\},
\end{align}
where $\delta=W^{-\kappa}$ and $\kappa<\theta/8$.

\Item
$ \hfill
 c_\pm=1-\dfrac{\lambda_0^2}{4}\pm
\dfrac{i\lambda_0}{2}\cdot \sqrt{1-\lambda_0^2/4},\quad c_0=\Re f(a_+)=\dfrac{2-\lambda_0^2}{4};\hfill
$\label{c_pm}

\Item
$\hfill
\mu_{\gamma}(x)=\exp\big\{-\frac{1}{2}\sum\limits_{j=-n+1}^{n}(x_j-x_{j-1})^2
-\frac{\gamma}{W^2}\sum\limits_{j=-n}^{n}x_j^2\big\};
\hfill$\label{mu}

\Item
$
\langle \ldots \rangle_0=Z_{\delta,\gamma}^{-1}\intd_{-\delta W}^{\delta W} (\ldots) \cdot
\mu_{\gamma}(x)\prod\limits_{q=-n}^nd x_q,\quad \quad Z_{\delta,\gamma}=
\intd_{-\delta W}^{\delta W} \mu_{\gamma}(x) \prod\limits_{q=-n}^nd x_q,$\label{angle}

$\langle \ldots \rangle=Z^{-1}_\gamma\intd (\ldots) \cdot \mu_{\gamma}(x) \prod\limits_{q=-n}^nd x_q, \quad\quad \quad\,\,\, Z_\gamma=
\intd \mu_{\gamma}(x) \prod\limits_{q=-n}^nd x_q,$

where $\delta>0$ and $\gamma\in \mathbb{C}$, $\Re\gamma>0$;

\item $\langle \ldots \rangle_*$
(and $\langle \ldots \rangle_{0,*}$) is (\ref{angle}) with $\mu_{\Re \gamma}(x)$
instead of $\mu_{\gamma}(x)$.
\end{itemize}

\section{Integral representation}
In this section we obtain an integral representation for $F_{2}$ of (\ref{F}) by using rather
standard SUSY techniques, i.e., integrals
over the Grassmann variables. Integration over the Grassmann variables has been introduced by Berezin and is widely used in the physics
literature.  A brief outline of the techniques can be found in \cite{TSh:14}, Sec. 2.1.

The main result of the section is the following proposition
\begin{proposition}\label{p:int_repr}
The second correlation function of the characteristic polynomials for 1D real symmetric Gaussian band
matrices, defined in (\ref{F}), can be represented as follows:
\begin{align}\label{F_rep}
&F_2\Big(\Lambda_0+\dfrac{\hat{\xi}}{N\rho(\lambda_0)}\Big)=-(2\pi^3)^{-N}\mdet^{-3} J
\int\exp\Big\{-\frac{W^2}{4}
\sum\limits_{j=-n+1}^n\Tr\,(F_j-F_{j-1})^2\Big\}\\ \notag
&\times\exp\Big\{-\frac{1}{4}\sum\limits_{j=-n}^n\Tr \Big(F_j+\dfrac{i\Lambda_{0,4}}{2}+
\dfrac{i\hat{\xi}_4}{N\rho(\lambda_0)}\Big)^2\Big\}
\prod\limits_{j=-n}^n
\mdet^{1/2}\big(F_j-i\Lambda_{0,4}/2\big)\prod\limits_{j=-n}^ndF_j,
\end{align}
where $\Lambda_{0,4}$ and $\hat{\xi}_4$ are defined in (\ref{xi_4}), and
\begin{equation}\label{F_j}
F_j=\left(
\begin{array}{llll}
x_j&w_{j1}&0&w_{j2}\\
\overline{w}_{j1}&y_j&-w_{j2}&0\\
0&-\overline{w}_{j2}&x_j&\overline{w}_{j1}\\
\overline{w}_{j2}&0&w_{j1}&y_j
\end{array}
\right),\quad dF_j=dx_j\, dy_j\, d\Re w_{j1}\, d\Im w_{j1}\, d\Re w_{j2}\, d\Im w_{j2}.
\end{equation}
Moreover, (\ref{F_rep}) can be rewritten in the form
\begin{align}\notag
&F_2\Big(\Lambda_0+\dfrac{\hat{\xi}}{N\rho(\lambda_0)}\Big)=-\dfrac{C(\xi)\mdet^{-3}J}{(24\pi)^{N}}
\intd\limits\exp\Big\{-\frac{W^2}{4}\sum\limits_{j=-n+1}^n\Tr (Q_j^*A_{j,4}Q_j-A_{j-1,4})^2\Big\}\\ \notag
&\times \exp\Big\{-\sum\limits_{j=-n}^n(f(a_j)+f(b_j))-
\frac{i}{2N\rho(\lambda_0)}\sum\limits_{j=-n}^n\Tr \big(R_jP_{-n}\big)^*A_{j,4}\,
(R_jP_{-n}\big)\hat{\xi}_4\Big\}
\\ \label{F_0_1}
&\times\prod\limits_{l=-n}^n(a_l-b_l)^4d\,\nu(P_{-n})\,d\overline{a\vphantom{b}}\, d\overline{b}\,
\prod\limits_{p=-n+1}^nd\nu(Q_p),
\end{align}
where $f$ is defined in (\ref{f}),
$A_{j,4}=\mathrm{diag}\{a_j,b_j,a_j,b_j\}$, $\{R_j\}$ and $P_{-n}$ are $4\times 4$ symplectic matrices,
$d\nu(P)$ is the Haar measure on $\mathring{Sp}(2)$, and
\begin{equation}\label{R_k}
R_k=\prod\limits_{s=k}^{-n+1}Q_s,\quad C(\xi)=\exp\Big\{\dfrac{\lambda_0(\xi_1+\xi_2)}{2\rho(\lambda_0)}+
\dfrac{\xi_1^2+\xi_2^2}{2N\rho(\lambda_0)^2}\Big\}.
\end{equation}
\end{proposition}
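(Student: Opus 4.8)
The plan is the standard supersymmetry route for characteristic polynomials, carried out in the fermionic (Grassmann) sector only: (i) write the product of the two determinants as a Gaussian Grassmann integral; (ii) average over $H_N$; (iii) decouple the resulting quartic Grassmann term by a Hubbard--Stratonovich transformation introducing the $4\times4$ matrices $F_j$ of (\ref{F_j}); (iv) integrate out the Grassmann variables, obtaining (\ref{F_rep}); and (v) pass to ``polar coordinates'' $F_j=Q_j^*A_{j,4}Q_j$ to reach (\ref{F_0_1}).

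For step (i)--(ii) I would use, for $s=1,2$,
$$\mdet(\lambda_s-H_N)=\int\exp\Big\{\sum_{j,k}\bar\psi^{(s)}_j\big(\lambda_s\delta_{jk}-(H_N)_{jk}\big)\psi^{(s)}_k\Big\}\prod_j d\bar\psi^{(s)}_j\,d\psi^{(s)}_j,$$
and then, using $H_N=H_N^T$ together with the anticommutativity of the $\psi$'s, rewrite the $H_N$-bilinear $\sum_{jk}(H_N)_{jk}(\cdot)$ as a quadratic form in the $4$-component Grassmann vector $\Phi_j=(\psi^{(1)}_j,\psi^{(2)}_j,\bar\psi^{(1)}_j,\bar\psi^{(2)}_j)$ at each site. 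Averaging over $H_N$ with (\ref{ban}) then produces $\exp\{\tfrac14\sum_{j,k}J_{jk}\,\Tr(\hat A_j\hat A_k)\}$, where $\hat A_j$ is the $4\times4$ matrix assembled from the six antisymmetric bilinears $\Phi_{j,a}\Phi_{j,b}$; here the two Kronecker-delta terms of (\ref{ban}) combine into this single trace, and it is precisely the second one (absent in the Hermitian case of \cite{TSh:14}) that forces $\hat A_j$, and dually the Hubbard--Stratonovich field $F_j$, to live on the six-parameter self-dual family (\ref{F_j}).

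For step (iii)--(iv) I would apply the Hubbard--Stratonovich identity on that six-dimensional space of matrices, using $J^{-1}=-W^2\Delta+1$ (whose quadratic form is $W^2\sum_j(x_j-x_{j-1})^2+\sum_j x_j^2$), to replace $\exp\{\tfrac14\sum_{jk}J_{jk}\Tr(\hat A_j\hat A_k)\}$ by $\int\prod_j dF_j\,\exp\{-\tfrac{W^2}{4}\sum_j\Tr(F_j-F_{j-1})^2-\tfrac14\sum_j\Tr F_j^2-\tfrac{i}{2}\sum_j\Tr(F_j\hat A_j)\}$, the Gaussian normalisation contributing exactly the prefactor $-(2\pi^3)^{-N}\mdet^{-3}J$ (the power $-3=-6/2$ counting the six real components of $F_j$, the sign coming from the Grassmann conventions). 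After this step the Grassmann variables enter only quadratically, and integrating them out yields a Pfaffian which, by the self-dual structure of $F_j$, equals $\prod_j\mdet^{1/2}(F_j-i\Lambda_{0,4}/2)$; a shift $F_j\mapsto F_j-i\Lambda_{0,4}/2-i\hat\xi_4/(N\rho(\lambda_0))$ collects the $\lambda_s$-dependence into the completed square of (\ref{F_rep}).

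For step (v), I would diagonalise $F_j=Q_j^*A_{j,4}Q_j$ with $A_{j,4}=\mathrm{diag}\{a_j,b_j,a_j,b_j\}$ (the Kramers double degeneracy of the self-dual class) and $Q_j\in\mathring{Sp}(2)=Sp(2)/\big(Sp(1)\times Sp(1)\big)$, the change of variables producing the Jacobian $\prod_l(a_l-b_l)^4$. Using the rotation invariance of $\Tr(F_j-F_{j-1})^2$ to absorb the angular part of $F_{j-1}$ into that of $F_j$, one is left with the incremental rotations $Q_p$ ($p>-n$) and the boundary rotation $P_{-n}$, with $R_k=\prod_{s=k}^{-n+1}Q_s$, the Haar measures multiplying to $d\nu(P_{-n})\prod_p d\nu(Q_p)$; substituting the eigenvalues, the combination of $\exp\{-\tfrac14\Tr(F_j+i\Lambda_{0,4}/2)^2\}$ with $\prod_j\mdet^{1/2}(F_j-i\Lambda_{0,4}/2)$ collapses to $\exp\{-\sum_j(f(a_j)+f(b_j))\}$ with $f$ as in (\ref{f}), the term linear in $\hat\xi$ becomes the bilinear $\Tr(R_jP_{-n})^*A_{j,4}(R_jP_{-n})\hat\xi_4$, and the residual $\hat\xi$-quadratic and $\lambda_0$-linear constants together with the Jacobian and volume factors assemble into $C(\xi)\mdet^{-3}J/(24\pi)^N$. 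The genuinely new difficulty relative to \cite{TSh:14} is the Grassmann setup itself: one must check that the $\beta=1$ average really collapses onto the six-parameter self-dual family (\ref{F_j}) and onto no larger space, and that the Grassmann integration delivers the Pfaffian equal to the \emph{specific} branch $\prod_j\mdet^{1/2}(F_j-i\Lambda_{0,4}/2)$ (so that these half-powers are unambiguously defined and correctly signed). The second delicate point is the Jacobian $\prod_l(a_l-b_l)^4$ of the symplectic polar decomposition and the consistency of trading the site-wise rotations $\{Q_j\}$ for $\{Q_p\},P_{-n}$; carrying the overall sign and the precise normalisations $(2\pi^3)^{-N}$, $C(\xi)$ and $(24\pi)^{-N}$ correctly through all of this is the part most likely to require care.
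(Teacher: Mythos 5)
Your proposal follows essentially the same route as the paper's proof: Grassmann representation of the two determinants, Gaussian averaging over $H_N$, Hubbard--Stratonovich decoupling of the quartic term onto the six-parameter self-dual family (\ref{F_j}) (which the paper implements as four separate scalar/complex Gaussian identities rather than one trace-coupled one, but these are equivalent), Grassmann integration to $\prod_j\mdet^{1/2}(F_j-i\Lambda_{0,4}/2)$, the shift in $F_j$, and finally the symplectic polar decomposition with Jacobian $\frac{\pi^2}{12}(a_j-b_j)^4$ and the change from site rotations $P_j$ to the incremental $Q_j=P_jP_{j-1}^*$. The proposal is correct in substance and in all the normalisations it explicitly tracks.
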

\begin{remark}
Formula (\ref{F_rep}) is valid for any dimension if we change the sum $\sum\Tr
(F_j-F_{j-1})^2$ to $\sum\Tr
(F_j-F_{j^\prime})^2$, where the last sum runs over all pairs of nearest neighbor
$j, j^\prime$ in the volume $\mathcal{L}\subset \mathbb{Z}^d$ (see the definition of RBM
(\ref{ban}) -- (\ref{J})).
\end{remark}
\begin{proof}
Representing determinants as integrals over Grassmann variables, we obtain
\begin{equation*}
\begin{array}{c}
F_2(\Lambda)={\bf E}\bigg\{\displaystyle\int
e^{-\sum\limits_{\alpha=1}^2\sum\limits_{j,k=-n}^n(\lambda_l-H_n)_{j,k}
\overline{\psi}_{j\alpha}\psi_{k\alpha}}\prod\limits_{\alpha=1}^2\prod\limits_{j=-n}^n
d\,\overline{\psi}_{j\alpha}d\,\psi_{j\alpha}\bigg\}\\
={\bf E}\bigg\{\displaystyle\int e^{-\sum\limits_{\alpha=1}^2\lambda_s\sum\limits_{j=-n}^n
\overline{\psi}_{j\alpha}\psi_{j\alpha}} \exp\bigg\{\sum\limits_{j<k}H_{jk}\sum\limits_{\alpha=1}^2\big(
\overline{\psi}_{j\alpha}\psi_{k\alpha}
+\overline{\psi}_{k\alpha}\psi_{j\alpha}\big)\\
+\sum\limits_{j=-n}^nH_{jj}\cdot\sum\limits_{\alpha=1}^2\overline{\psi}_{j\alpha}\psi_{j\alpha}\bigg\}
\prod\limits_{\alpha=1}^2\prod\limits_{j=-n}^nd\,\overline{\psi}_{j\alpha}d\,\psi_{j\alpha}\bigg\},
\end{array}
\end{equation*}
where $\{\psi_{j\alpha}\}$, $j=-n,\ldots,n$, $\alpha=1,2$ are the Grassmann variables ($2n+1$ variables for each
determinant in (\ref{F})). Here and below we use Greek letters such as $\alpha, \beta$ etc. for
the field index and Latin letters $j, k$ etc. for the position index.

Integrating over the measure (\ref{band}), we get
\begin{align}\label{usr}
&F_2(\Lambda)=\displaystyle\int \prod\limits_{\alpha=1}^2\prod\limits_{q=-n}^nd\,\overline{\psi}_{q\alpha}
d\,\psi_{q\alpha}
\exp\Big\{-\sum\limits_{\alpha=1}^2\lambda_\alpha\sum\limits_{p=-n}^n \overline{\psi}_{p\alpha}
\psi_{p\alpha}\Big\}\\ \notag\times
\exp&\Big\{\dfrac{1}{2}\sum\limits_{j<k}J_{jk}(\overline{\psi}_{j1}\psi_{k1}+\overline{\psi}_{j2}\psi_{k2}+
\overline{\psi}_{k1}\psi_{j1}+\overline{\psi}_{k2}\psi_{j2})^2+\sum\limits_{j=-n}^nJ_{jj}
(\overline{\psi}_{j1}\psi_{j1}+\overline{\psi}_{j2}\psi_{j2})^2\Big\}.
\end{align}
Now we will need the Hubbard-Stratonovich transform (see,  e.g., \cite{Sp:12}).
This is a well-known simple trick, which
is just the Gaussian integration. In the simplest form it looks as following:
\begin{equation}\label{Hub}
e^{a^2/2}=(2\pi)^{-1/2} \int e^{-x^2/2+ax}dx.
\end{equation}
Here $a$ can be complex number or the sum of the products of even numbers of Grassmann variables.

Applying a couple of times  (\ref{Hub}), we can write:
\begin{align}\label{HS1}
&\int \exp\Big\{-(J^{-1}x,x)/2+i\sum\limits_{j=-n}^n x_{j}
\overline{\psi}_{j1}\psi_{j1}\Big\} \prod\limits_{j=-n}^n dx_{j}\\ \notag
 &\quad =(2\pi)^{N/2}\cdot\mdet^{1/2} J\cdot
\exp\Big\{-\dfrac{1}{2}\sum\limits_{j,k=-n}^nJ_{jk}\overline{\psi}_{j1}\psi_{j1}
\overline{\psi}_{k1}\psi_{k1}\Big\},\\ \label{HS2}
&\int \exp\Big\{-(J^{-1}y,y)/2+i\sum\limits_{j=-n}^n y_{j}
\overline{\psi}_{j2}\psi_{j2}\Big\} \prod\limits_{j=-n}^n dy_{j}\\ \notag
 &\quad =(2\pi)^{N/2}\cdot\mdet^{1/2} J\cdot
\exp\Big\{-\dfrac{1}{2}\sum\limits_{j,k=-n}^nJ_{jk}\overline{\psi}_{j2}\psi_{j2}
\overline{\psi}_{k2}\psi_{k2}\Big\},\\ \notag
\end{align}
where $x=\{x_j\}_{j=-n}^n$,  $y=\{y_j\}_{j=-n}^n$. In addition,
\begin{align}\label{HS3}
&\int \exp\Big\{-(J^{-1}\Re w_1,\Re w_1)-(J^{-1}\Im w_1,
\Im w_1)\Big\}\\ \notag
&\quad\quad\times\exp\Big\{i\sum\limits_{j=-n}^n w_{j1}
\overline{\psi}_{j1}\psi_{j2}+i\sum\limits_{j=-n}^n \overline{w}_{j1}
\overline{\psi}_{j2}\psi_{j1}\Big\} \prod\limits_{q=-n}^n d\Re w_{q1}d\Im w_{q1}\\ \notag
 &\quad =\pi^N\cdot\mdet J\cdot
\exp\Big\{-\sum\limits_{j\ne k}J_{jk}\overline{\psi}_{j1}\psi_{j2}
\overline{\psi}_{k2}\psi_{k1}-\sum\limits_{j=-n}^nJ_{jj}\overline{\psi}_{j1}\psi_{j2}
\overline{\psi}_{j2}\psi_{j1}\Big\},
\end{align}
\begin{align}\label{HS4}
&\int \exp\Big\{-(J^{-1}\Re w_2,\Re w_2)-(J^{-1}\Im w_2,
\Im w_2)\Big\}\\ \notag
&\quad\quad\times \exp\Big\{i\sum\limits_{j=-n}^n w_{j2}
\overline{\psi}_{j1}\overline{\psi}_{j2}+i\sum\limits_{j=-n}^n \overline{w}_{j2}
\psi_{j1}\psi_{j2}\Big\} \prod\limits_{q=-n}^n d\Re w_{q2}d\Im w_{q2}\\ \notag
 &\quad =\pi^N\cdot\mdet J\cdot
\exp\Big\{-\sum\limits_{j\ne k}J_{jk}\overline{\psi}_{j1}\overline{\psi}_{j2}
\psi_{k1}\psi_{k2}-\sum\limits_{j=-n}^nJ_{jj}\overline{\psi}_{j1}\overline{\psi}_{j2}
\psi_{j1}\psi_{j2}\Big\},
\end{align}
where $\Re w_\alpha=\{\Re w_{j\alpha}\}_{j=-n}^n$, $\Im w_\alpha=\{\Im w_{j\alpha}\}_{j=-n}^n$, $\alpha=1,2$.

Substituting (\ref{HS1}) -- (\ref{HS4}) and (\ref{J}) for $J^{-1}_{jk}$ into (\ref{usr}), putting
$\Lambda=\Lambda_0+\hat{\xi}/N\rho(\lambda_0)$, and
integrating over the Grassmann variables, we obtain
\begin{align*}\notag
&F_2\Big(\Lambda_0+\dfrac{\hat{\xi}}{N\rho(\lambda_0)}\Big)=-(2\pi^3)^{-N}\mdet^{-3} J\int\exp\Big\{-\frac{W^2}{4}
\sum\limits_{j=-n+1}^n\hbox{Tr}\,(F_j-F_{j-1})^2\Big\}\\
&\times\exp\Big\{-\frac{1}{4}\sum\limits_{j=-n}^n\Tr F_j^{\,2}\Big\}
\prod\limits_{j=-n}^n
\mdet^{1/2}\big(F_j-i\Lambda_{0,4}-i\hat{\xi}_4/N\rho(\lambda_0)\big)\prod\limits_{j=-n}^ndF_j
\end{align*}
with $F_j$ of (\ref{F_j}) and $\Lambda_{0,4}$, $\hat{\xi}_4$ of (\ref{xi_4}). This gives (\ref{F_rep}) after shifting $F_j\to F_j+i\Lambda_{0,4}/2+i\hat{\xi}_4/N\rho(\lambda_0)$. The reason
of such a shift is that we need to have saddle-points lying on the contour of the integration
(see (\ref{a_pm})).

The matrices of the form (\ref{F_j}) have two eigenvalues $a_j,b_j$ of the multiplicity two and
can be considered as quaternion $2\times 2$ matrices.
In this language $F$ is a quaternion self-dual Hermitian  matrix, and it can be diagonalized
by the quaternion unitary $2\times 2$ matrices $Sp(2)$ (see , e.g., \cite{Me:91}, Chapter 2.4),
i.e., unitary $4\times 4$ matrices $P$ which admit the relation
\[
P\, \left(\begin{array}{cc}
0&I_2\\
-I_2&0
\end{array}\right)\, P^{t}=\left(\begin{array}{cc}
0&I_2\\
-I_2&0
\end{array}\right).
\]
Change the variables to $F_j=P_j^*A_{j,4} P_j$, where $P_j\in \mathring{Sp}(2)$ and
$A_{j,4}=\hbox{diag}\,\{a_j,b_j,a_j,b_j\}$.
Then $d F_j$ of (\ref{F_j}) becomes (see, e.g., \cite{Me:91} )
$$\dfrac{\pi^2}{12}(a_j-b_j)^4da_j\,db_j
d\nu(P_j),$$ where $d\nu(P_j)$ is the normalized to unity Haar measure on the symplectic group $\mathring{Sp}(2)$.

Thus, we have
\begin{align*}
F_2\Big(\Lambda_0+\dfrac{\hat{\xi}}{N\rho(\lambda_0)}\Big)=&-\dfrac{C(\xi)\mdet^{-3}J }{(24\pi)^{N}}
\int\,d\overline{a\vphantom{b}}\, d\overline{b}\int_{\mathring{Sp}(2)^N}\,\prod\limits_{j=-n}^nd\nu(P_j)\\
&\times\exp\Big\{-\frac{W^2}{4}\sum\limits_{j=-n+1}^n\Tr (P_j^*A_{j,4}P_j-
P_{j-1}^*A_{j-1,4}P_{j-1})^2\Big\}\\
&\times
\exp\Big\{-\frac{1}{4}\sum\limits_{j=-n}^n\Tr \Big(A_{j,4}+\frac{i\Lambda_{0,4}}{2}\Big)^2-
\dfrac{i}{2N\rho(\lambda_0)}\sum\limits_{j=-n}^n\Tr P_j^*A_{j,4}P_j\hat{\xi}_4\Big\}
\\
&\times\prod\limits_{k=-n}^n (a_k-i\lambda_0/2\big) (b_k-i\lambda_0/2\big)\prod\limits_{k=-n}^n
(a_k-b_k)^4,
\end{align*}
where $C(\xi)$ is defined in (\ref{R_k}), and
\begin{equation*}
d\overline{a\vphantom{b}}=\prod\limits_{j=-n}^nda_j,\quad d\overline{b}=\prod\limits_{j=-n}^ndb_j.
\end{equation*}
Now changing the ``angle variables'' $P_j$ to $Q_j=P_jP_{j-1}^*$, $j=-n+1,\ldots,n$ (i.e., the new variables are
$P_{-n}, Q_{-n+1}, Q_{-n+2},\ldots,Q_n$), we get (\ref{F_0_1}).
\end{proof}

\section{Sketch of the proof of Theorem \ref{thm:1}}

The strategy of the proof is the same as in \cite{TSh:14}. The main difference is that now we perform the integration over $\mathring{Sp}(2)$
instead of $\mathring{U}(2)$, which is much more complicated. 

\smallskip

So first we note that the main integrations over $a_j$, $b_j$ are the same as in \cite{TSh:14}, eq.(2.11),
and so the expected saddle-points for each $a_j$ and $b_j$
are still $a_\pm$ (see (\ref{a_pm})). Moreover, we can use the results of
\cite{TSh:14}, Sec. 4.1 -- 4.2, where the properties of the function $f$ and of the complex Gaussian
distribution $\mu_\gamma$ of (\ref{mu}) were studied (see Sec. 4.1).

The second step is to prove that the main contribution to the integral (\ref{F_0_1}) is given by the integral $\Sigma$ over $\Omega_\delta$ (see (\ref{Om_delta})). More precisely, we are going to prove that
\begin{equation}\label{F_2}
F_2\Big(\Lambda_0+\dfrac{\hat{\xi}}{N\rho(\lambda_0)}\Big)=-\dfrac{C(\xi) \mdet^{-3}J}{(24\pi)^{N}}
\cdot \Sigma\cdot(1+o(1)),\quad W\to \infty.
\end{equation}
The bound for the complement $|\Sigma_c|$ can be obtained by inserting the absolute value inside the integral and
by performing exactly the integral over the symplectic groups. After this, since we are
far from the saddle-points of $f$, one can control the integral in the same way as in
\cite{TSh:14} (see Lemma \ref{l:2}).

The next step is the calculation of $\Sigma$ (see Sec. 5.2, Lemma \ref{l:sigma}).
We are going to show that the main contribution to $\Sigma$ is given by $\Sigma_\pm$, i.e., the integral
over $\Omega_\delta^{\pm}$. Consider $\Omega_\delta^{\pm}$. First note that shifting
$$P_j \to \left(
\begin{array}{ll}
\sigma'&0\\
0&\sigma'
\end{array}
\right) P_j$$
for some $j$ ($\sigma'$ is defined in (\ref{s-s'})), we can rotate each domain of type $$\{ \{a_j\}, \{b_j\}: (a_j\in U_\delta(a_+),\,\,
b_j\in U_\delta(a_-))\,\,\hbox{or}\,\,
(a_j\in U_\delta(a_-),\, b_j\in U_\delta(a_+))\,\,\forall j\}$$ to the $\delta$-neighborhood
of the point $(\overline{a}_+,\overline{a}_-)$ with $\overline{a}_\pm$ of (\ref{a_pm}). Thus, we can
consider the contribution over $\Omega^{\pm}_{\delta}$ as
$2^{N}$ contributions of the $\delta$-neighborhood of the point $(\overline{a}_+,\overline{a}_-)$.
Consider this neighborhood,
and change the variables as
\begin{align*}
a_j&\to a_++\tilde{a}_j/W,\quad |\tilde{a}_j|\le\delta W,\\ \notag
b_j&\to a_-+\tilde{b}_j/W, \quad\, |\tilde{b}_j|\le\delta W,
\end{align*}
and set $\widetilde{A}_{j,4}=\hbox{diag}\,\{\tilde{a\vphantom{b}}_j,\tilde{b}_j, \tilde{a\vphantom{b}}_j,\tilde{b}_j\}$.
To compute $\Sigma_\pm$, one has to perform first the integral over the symplectic groups.
This integral is some analytic in $\{\tilde{a\vphantom{b}}_j/W\}$, $\{\tilde{b}_j/W\}$ function $\mathcal{F}$.
As in \cite{TSh:14}, the main idea is to prove that the leading part of this function can be obtained by
replacing all $Q_s$ in the ``bad'' term
$$\exp\Big\{-\frac{i}{2N\rho(\lambda_0)}\sum\limits_{j=-n}^n
\Tr \big(\prod\limits^{-n+1}_{s=j} Q_s\cdot P_{-n}\big)^*(L_4+\widetilde{A}_{s,4}/W)\,
(\prod\limits^{-n+1}_{s=j} Q_s \cdot P_{-n}\big)\hat{\xi}_4\Big\}$$
with $I$. To this end, we expand the ``bad'' term into a series and for each summand, which is
analytic in $\{\tilde{a\vphantom{b}}_j/W\}$, $\{\tilde{b}_j/W\}$, find the bound for
its Taylor coefficients (see Lemma \ref{l:un}). This means that we obtain the proper majorant for $\mathcal{F}$
in the sense of \cite{TSh:14} (i.e., some function whose Taylor expansion's coefficients are at least
the absolute value of the corresponding coefficient of
the Taylor expansion of $\mathcal{F}$), which helps to change the averaging over the complex measure by
the averaging of the majorant over the positive one (see Lemma \ref{l:maj}).
Then, similarly to \cite{TSh:14}, we will show that the leading term of $\Sigma_\pm$ is the integral over the Gaussian measures
$\mu_{c_\pm}$ in $\{a_j\}$ and $\{b_j\}$
variables, and the integral over the
symplectic group $d\nu(P_{-n})$ which gives the kernel (\ref{dS}).
This yields an asymptotic expression for $\Sigma_\pm$ (see Lemma \ref{l:sig}).

Also it will be shown in Sec. 5.2.2 that the integrals $\Sigma_+$ and $\Sigma_-$ over $\Omega_\delta^+$ and
$\Omega_\delta^-$ have smaller orders than $\Sigma_\pm$ (see Lemma \ref{l:sig_+}).

\section{Preliminary results}
In this section we restated the results of \cite{TSh:14}, Sec. 4.2., where the properties of the
complex Gaussian
distribution $\mu_\gamma$ of (\ref{mu}) were studied. All proofs can be found in \cite{TSh:14}.

First note that the straightforward calculation gives in the small neighborhood of $a_\pm$
\begin{equation}\label{f_exp}
f(x)-f(a_\pm)=c_{\pm}(x-a_\pm)^2+s_3(x-a_\pm)^3+\ldots=:c_{\pm}(x-a_\pm)^2+\varphi_\pm(x-a_\pm),
\end{equation}
where $c_\pm$ is defined in (\ref{c_pm}) and $|\varphi_\pm(x-a_\pm)|=O(|x-a_\pm|^3)$.

Now set
\begin{align}\label{mu_m}
\mu^{(m)}_{\gamma}(x)=\exp\big\{-\frac{1}{2}\sum\limits_{j=2}^{m}(x_j-x_{j-1})^2
-\frac{\gamma}{W^2}\sum\limits_{j=1}^{m}x_j^2\big\}.
\end{align}

\begin{lemma}[\cite{TSh:14}, Lemma 3]\label{l:okr}
We have for any $\gamma\in \mathbb{C}$, $\Re \gamma>0$

\begin{enumerate}
\item
 $\hfill 
 Z^{(m)}_\gamma:=\displaystyle\int
\mu^{(m)}_{\gamma}(x)
\prod\limits_{q=1}^md x_q=(2\pi)^{m/2}\mdet^{-1/2}(-\Delta+2\gamma/W^2)\hfill $

$ \hfill
=(2\pi)^{m/2}\Big(\dfrac{\sqrt{2\gamma}}{W}\sinh \dfrac{m\sqrt{2\gamma}}{W}\Big)^{-1/2}(1+o(1)).
 \hfill
$

Moreover, if we set
\begin{equation}\label{G}
G^{(m)}(\gamma)=\left(-\Delta+\dfrac{2\gamma}{W^2}\right)^{-1},
\end{equation}
then
\begin{equation}\label{G_as}
|G^{(m)}_{ii}(\gamma)|\le\dfrac{C_\gamma W}{\sqrt{2\gamma}}\coth\dfrac{m\sqrt{2\gamma}}{W}(1+o(1)).
\end{equation}


\item $ \dfrac{|Z^{(m)}_\gamma-Z_{\delta,\gamma}^{(m)}|}{|Z^{(m)}_\gamma|}=
|Z^{(m)}_\gamma|^{-1}\bigg|\,\displaystyle\int\limits_{\max |x_i|>\delta W}\mu^{(m)}_{\gamma}(x)
\prod\limits_{q=1}^md x_q\bigg|\le C_1\, e^{-C_2\delta^2 W},\quad W\to\infty, $ where $m> CW$,
$\delta=W^{-\kappa}$ for sufficiently small $\kappa<\theta/8$, and
\[
Z_{\delta,\gamma}^{(m)}= \intd_{-\delta W}^{\delta W} \mu_{\gamma}^{(m)}(x) \prod\limits_{q=1}^md x_q.
\]

In addition, for any $m$
\[
|Z^{(m)}_\gamma|^{-1}\bigg|\,\displaystyle\int\limits_{|x_k-x_1|>\delta W}\mu^{(m)}_{\gamma}(x)
\prod\limits_{q=1}^md x_q\bigg|\le C_1\, e^{-C_2\delta^2 W},\quad W\to\infty,
\]
and for $m>CW$ and any $\gamma_1,\gamma_2\in \mathbb{C}$, $\Re \gamma_1,\Re \gamma_2>0$
\begin{equation}\label{otn_mod}
\dfrac{|Z^{(m)}_{\gamma_1}|}{|Z^{(m)}_{\gamma_2}|}\le e^{C_1m/W}, \quad W\to\infty.
\end{equation}
\item Let $m>C_1W$, $k\le Cm/W$, $S=\{i_1,\ldots,i_s\} \subset \{1,\ldots,m\}$, and
$\sum\limits_{l=1}^sk_{i_l}=k$, where
$k_l\in \{1,\ldots,k\}$. Then $$
|Z^{(m)}_\gamma|^{-1}\bigg|\,\displaystyle\int\limits_{\max |x_i|>\delta W}\prod\limits_{j\in S}(x_j/W)^{k_j}\cdot \mu^{(m)}_{\gamma}(x)
\prod\limits_{q=1}^md x_q\bigg|\le e^{-C_1\delta^2 W},\quad W\to\infty,
$$
where $\delta=W^{-\kappa}$ for sufficiently small $\kappa<\theta/8$.
\end{enumerate}
\end{lemma}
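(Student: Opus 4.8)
The three parts of the lemma all concern the complex Gaussian measure with density $\mu^{(m)}_\gamma$ of (\ref{mu_m}), whose quadratic form is $\frac12\big(x,(-\Delta+2\gamma/W^2)x\big)$ with $-\Delta$ the Neumann path Laplacian on $\{1,\dots,m\}$. The plan is to reduce everything to two ingredients: (i) an exact transfer-matrix evaluation of the normalising determinant $\mdet(-\Delta+2\gamma/W^2)$ together with the matching bound for the diagonal of its inverse $G^{(m)}(\gamma)$ of (\ref{G}); and (ii) the elementary but crucial identity $|\mu^{(m)}_\gamma(x)|=\mu^{(m)}_{\Re\gamma}(x)$, valid because $\Re\gamma>0$ makes the real part of the form positive definite and the imaginary part a pure phase, which turns every estimate on an integral of an absolute value into an honest real-Gaussian computation with covariance $G^{(m)}(\Re\gamma)$.

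For part 1 I would use that $-\Delta+2\gamma/W^2$ is tridiagonal, with interior diagonal entries $2+2\gamma/W^2$, corner entries $1+2\gamma/W^2$ and off-diagonal $-1$, so its determinant $D_m$ satisfies a second-order linear recursion in $m$. Solving it with the ansatz $t+t^{-1}=2+2\gamma/W^2$, i.e.\ $t=e^{\pm\mu}$ with $\cosh\mu=1+\gamma/W^2$ and hence $\mu=\frac{\sqrt{2\gamma}}{W}(1+O(W^{-2}))$, and matching the two Neumann corner conditions gives $D_m=\frac{\sinh m\mu}{\sinh\mu}(1+o(1))$; substituting $\sinh\mu=\frac{\sqrt{2\gamma}}{W}(1+o(1))$ yields the stated formula for $Z^{(m)}_\gamma$. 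For $G^{(m)}_{ii}(\gamma)$ I would use the standard representation $G^{(m)}_{ij}=u^-_{\,i\wedge j}u^+_{\,i\vee j}/\mathcal W$ in terms of the two solutions $u^\pm$ of the homogeneous recursion satisfying the Neumann condition at the left, resp.\ right, end, with constant discrete Wronskian $\mathcal W$; since $u^\pm$ grow/decay like $e^{\pm\mu\,\cdot}$ and $\mathcal W\asymp\mu\sinh m\mu$, one gets $|G^{(m)}_{ii}(\gamma)|\le\frac{C_\gamma}{\mu}\coth(m\mu)(1+o(1))$, which is (\ref{G_as}). The delicate point here is keeping the error terms uniform in $\gamma$ with $\Re\gamma>0$, since this is what propagates into the $W$- and $m$-dependence of all later bounds.

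For parts 2 and 3 I would invoke (ii) to replace $\mu^{(m)}_\gamma$ by $\mu^{(m)}_{\Re\gamma}$ in the absolute values, so that under the normalised measure $(Z^{(m)}_{\Re\gamma})^{-1}\mu^{(m)}_{\Re\gamma}$ each $x_i$ is centred Gaussian with variance $G^{(m)}_{ii}(\Re\gamma)\le CW/\sqrt{2\Re\gamma}$ by part 1, and $x_k-x_1$ has variance of the same order. A Chernoff bound then gives $\mathbf P(|x_i|>\delta W)\le e^{-c\sqrt{\Re\gamma}\,\delta^2W}$ and likewise for $|x_k-x_1|$; a union bound over the $m\le CN=CW^{2/(1+\theta)}$ coordinates, combined with $\delta=W^{-\kappa}$ and $\kappa<\theta/8$ (so that $\log m\ll\delta^2W=W^{1-2\kappa}$), gives the two tail bounds after dividing by $|Z^{(m)}_\gamma|$, which the part 1 asymptotics bound below by $cZ^{(m)}_{\Re\gamma}$ up to a factor $e^{Cm/W}$ (again negligible against $e^{c\delta^2W}$), using $\Re\sqrt{2\gamma}\ge\sqrt{2\Re\gamma}$. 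The same comparison of the $\sinh$-asymptotics, with $\Re\sqrt{2\gamma_2}=O(1)$, yields (\ref{otn_mod}). Finally, for the polynomial-weighted tail of part 3 I would apply Cauchy--Schwarz to the splitting $\big(\prod_{j\in S}|x_j/W|^{k_j}(\mu^{(m)}_{\Re\gamma})^{1/2}\big)\cdot\big(\mathbf 1_{\max|x_i|>\delta W}(\mu^{(m)}_{\Re\gamma})^{1/2}\big)$: the first factor squared is $Z^{(m)}_{\Re\gamma}$ times a Gaussian moment which by H\"older and the bound on $G^{(m)}_{jj}$ is at most $(Ck/(W\sqrt{2\Re\gamma}))^{k}$, and since $k\le Cm/W$ and $m\le N$ this is sub-exponential in $\delta^2W$; the second factor squared is bounded by $C_1e^{-C_2\delta^2W}Z^{(m)}_{\Re\gamma}$ by part 2. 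Dividing by $|Z^{(m)}_\gamma|$ and absorbing the sub-exponential factor into $e^{C_2\delta^2W/2}$ gives the claimed $e^{-C_1\delta^2W}$.

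In summary, the only genuinely non-routine step is part 1 --- the exact hyperbolic asymptotics of the tridiagonal determinant and of the diagonal Green's function, with error control uniform in $\gamma$; parts 2 and 3 are then standard Gaussian tail estimates, the only thing to verify being that the arithmetic of the exponents forced by $m>CW$, $\delta=W^{-\kappa}$ and $\kappa<\theta/8$ keeps the exponential gain $e^{-c\delta^2W}$ dominant over the polynomial and entropic losses. Full details are carried out in \cite{TSh:14}.
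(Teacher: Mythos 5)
The paper does not prove this lemma itself; it is quoted verbatim from \cite{TSh:14}, Lemma 3, and the section explicitly states that ``All proofs can be found in \cite{TSh:14}.'' So there is no in-paper proof to compare against, only your reconstruction. Your two structural ideas are the right ones: a transfer-matrix evaluation of the Neumann tridiagonal determinant together with the $u^\pm$/Wronskian representation for the Green's function, and the reduction $|\mu^{(m)}_\gamma|=\mu^{(m)}_{\Re\gamma}$ that turns all the tail estimates into real Gaussian ones with variance controlled by $G^{(m)}_{ii}(\Re\gamma)$.

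There is, however, a concrete error in Part~1. You claim that matching the Neumann corners gives $D_m=\mdet(-\Delta+2\gamma/W^2)=\frac{\sinh m\mu}{\sinh\mu}(1+o(1))$ and that substituting $\sinh\mu\approx\sqrt{2\gamma}/W$ then ``yields the stated formula.'' It does not: that substitution produces $D_m\approx\frac{W}{\sqrt{2\gamma}}\sinh\frac{m\sqrt{2\gamma}}{W}$, which is the \emph{reciprocal} of the factor in the lemma. The formula $\frac{\sinh(\cdot)}{\sinh\mu}$ is, up to an index shift, the Dirichlet determinant. For the Neumann chain the zero mode of $-\Delta$ contributes the small eigenvalue $2\gamma/W^2$, and the exact answer (with $\alpha=\gamma/W^2$, $\cosh\mu=1+\alpha$) is
\[
D_m=\frac{2\alpha\,\sinh(m\mu)}{\sinh\mu}=2\tanh(\mu/2)\sinh(m\mu)\approx\mu\,\sinh(m\mu)
\approx\frac{\sqrt{2\gamma}}{W}\sinh\frac{m\sqrt{2\gamma}}{W},
\]
which one can verify already at $m=2$, where $D_2=(1+2\alpha)^2-1=4\alpha(1+\alpha)$, while $\frac{\sinh 2\mu}{\sinh\mu}=2\cosh\mu=2(1+\alpha)$. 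Your claimed $D_m$ is therefore off by the factor $2\alpha=2\gamma/W^2$. Notice that you are also internally inconsistent: a few lines later you quote the Wronskian $\mathcal W\asymp\mu\sinh m\mu$ when deriving the bound on $G^{(m)}_{ii}$, but for this problem $\mathcal W$ \emph{is} the determinant (up to normalisation of $u^\pm$), and $\mu\sinh m\mu$ is the correct value, contradicting the $\frac{\sinh m\mu}{\sinh\mu}$ you asserted just before. The Green's-function estimate and everything in Parts~2--3 is fine as written (the erroneous factor cancels in ratios $Z^{(m)}_{\gamma_1}/Z^{(m)}_{\gamma_2}$), but the asymptotic for $Z^{(m)}_\gamma$ is itself one of the claims of the lemma, so the slip needs to be corrected before the argument is complete.

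Two smaller remarks. For the ``for any $m$'' bound on $\int_{|x_k-x_1|>\delta W}$ you should note separately that $\mathrm{Var}(x_k-x_1)\le C\min(m,W)$ (for $m\le CW$ use the path-graph structure, for $m> CW$ use the mass regularisation and the $\coth$ bound), since the $\coth$ factor blows up for small $m$ and your argument as written only covers $m>CW$. And the moment bound in Part~3 should be phrased via the Gaussian moment inequality for the jointly Gaussian vector $(x_j)_{j\in S}$ rather than a loose pointwise H\"older; otherwise the displayed factor does not obviously control the product.
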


Introduce the following partial ordering. Let $\Phi_1(x_1,\ldots,x_n)$, $\Phi_2(x_1,\ldots,x_n)$ be two analytic
functions in some ball centered at $0$, and let the coefficients of the Taylor expansion of $\Phi_2$ be
non-negative. Then we write
\begin{equation}\label{major}
\Phi_1\prec \Phi_2
\end{equation}
if the absolute value of each coefficient of the Taylor expansion of $\Phi_1$ does not exceed the corresponding
coefficient of $\Phi_2$.

 It is easy to see that
\begin{equation}\label{prop}
\Phi_3\prec \Phi_1,\quad \Phi_4\prec \Phi_2 \Rightarrow \Phi_3\Phi_4\prec \Phi_1\Phi_2.
\end{equation}

 We will need
\begin{lemma}[\cite{TSh:14}, Lemma 8]\label{l:maj}

(i) $\quad$ Let $\,|\phi_1|\le CW^{-1}$, $|\phi_2|=o(1)$ and $|\phi_k|\le C^k$ for some absolute
constant $C>0$. Then
\begin{align}\label{gen_phi}
&\langle\prod\limits_{i=-n}^n(1+\sum\limits_{l=1}^\infty |\phi_l|x_i^l/W^l)\rangle_{0,*}\le
\exp\{C|\phi_2|n/W\}.
\end{align}
In particular, for $\,|\phi_1|\le CW^{-1}$, $|\phi_2|=O(1/W)$ we have
\begin{equation*}
\langle\prod\limits_{i=-n}^n(1+\sum\limits_{l=1}^\infty |\phi_l|x_i^l/W^l)\rangle_{0,*}=1+o(1).
\end{equation*}

(ii) $\quad$ If
\begin{equation*}
\Phi_1(s_1,\ldots,s_n)-\Phi_1(0,\ldots,0)\prec \prod\limits_{j=1}^n(1+q(s_i))-1,
\end{equation*}
where $s_i=s(\tilde{a\vphantom{b}}_i/W,\tilde{a\vphantom{b}}_{i+1}/W,\ldots,\tilde{a\vphantom{b}}_{i+k}/W,
\tilde{b}_i/W,\tilde{b}_{i+1}/W,\ldots,\tilde{b}_{i+k}/W)$ is a polynomial with $s(0,\ldots,0)=0$, $k$ is
an $n$-independent constant, and $q(s)=\sum_{j=1}^{\infty}|c_j|s^j$ with $|c_1|\le CW^{-1}$, $|c_2|=o(1)$,
$|c_{\,l}|\le (C_0)^l$,
$l\ge 3$, then
\begin{equation*}
\big|\langle\Phi_1(s_1,\ldots,s_n)-\Phi_1(0,\ldots,0)\rangle_0\big|\le \langle\prod\limits_{j=1}^n(1+q(s_i^*))-1\rangle_{0,*}+e^{-Cn/W},
\end{equation*}
where $s_i^*$ is obtained from $s_i$ by replacing the coefficients of $s$ with their absolute values.
\end{lemma}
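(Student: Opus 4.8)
The plan is to prove the two parts separately, in each case reducing the estimate to a moment computation for the positive Gaussian measure $\langle\,\cdot\,\rangle_{0,*}$, whose covariance is the matrix $G:=G^{(N)}(\Re\gamma)=(-\Delta+2\Re\gamma/W^2)^{-1}$ of (\ref{G}). I will use three facts about $G$: its entries are non-negative, since $-\Delta+2\Re\gamma/W^2$ is an M-matrix (off-diagonal entries $-1\le0$); $\sum_j G_{ij}=W^2/(2\Re\gamma)$ for every $i$, because $\Delta\mathbf 1=0$ for the Neumann Laplacian; and $\|G\|=W^2/(2\Re\gamma)$ while $G_{ii}\le CW$, so $\Tr G\le CNW$, by Lemma \ref{l:okr}, (\ref{G_as}), with $m=N$. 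Moreover $\mu_{\Re\gamma}$ is log-supermodular (the off-diagonal entries of $-\Delta+2\Re\gamma/W^2$ are $\le0$), so by Griffiths' inequality every monomial moment of $\langle\,\cdot\,\rangle_{0,*}$ is non-negative; and by Lemma \ref{l:okr}(2) the truncated moments of $\langle\,\cdot\,\rangle_{0,*}$ agree with the untruncated Gaussian ones up to $O(e^{-C\delta^2W})$.

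For part (i) I would use $1+t\le e^t$, valid on the support $\{|x_i|\le\delta W\}$ where $1+\psi_i>0$, to bound $\prod_i(1+\psi_i)\le\exp\{\sum_i\psi_i\}$ with $\psi_i:=\sum_{l\ge1}|\phi_l|(x_i/W)^l$ (the series converges on the support since $|\phi_l|\le C^l$ and $C\delta\to0$). Split $\sum_i\psi_i=L+Q+R$, where $L=|\phi_1|W^{-1}\sum_i x_i$, $Q=|\phi_2|W^{-2}\sum_i x_i^2$ and $R=\sum_i\sum_{l\ge3}|\phi_l|(x_i/W)^l$, and apply H\"older's inequality with exponents $3,3,3$. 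The first factor is computed by passing to the untruncated Gaussian (legitimate up to $O(e^{-C\delta^2W})$ by Lemma \ref{l:okr}(2)): $\langle e^{3L}\rangle_{0,*}=(1+o(1))\exp\{\tfrac92|\phi_1|^2W^{-2}\sum_{i,j}G_{ij}\}=(1+o(1))\exp\{O(N/W^2)\}=1+o(1)$, using $|\phi_1|\le C/W$ and $\sum_j G_{ij}=W^2/(2\Re\gamma)$. Likewise $\langle e^{3Q}\rangle_{0,*}=(1+o(1))\,\mdet^{-1/2}(I-6|\phi_2|W^{-2}G)\le(1+o(1))\exp\{6|\phi_2|W^{-2}\Tr G\}\le\exp\{C|\phi_2|n/W\}$, since $6|\phi_2|W^{-2}\|G\|=o(1)$ and $\Tr G\le CNW$. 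The last factor is the delicate one and is estimated directly on the support: the odd-degree terms of $R$ have vanishing $\langle\,\cdot\,\rangle_{0,*}$-moment by the $x\mapsto-x$ symmetry of the box, and the even-degree ($l\ge4$) ones have moment $O(W^{-l/2})$, so $\langle R\rangle_{0,*}=O(N/W^2)=o(1)$ and, by a hypercontractivity-type bound for polynomials of (truncated) Gaussians together with $G_{ii}\le CW$ and $\sum_j G_{ij}=O(W^2)$, the central moments of $R$ are $o(1)$ as well, whence $\langle e^{3R}\rangle_{0,*}=1+o(1)$. Collecting the three factors gives $\langle\prod_i(1+\psi_i)\rangle_{0,*}\le\exp\{C|\phi_2|n/W\}(1+o(1))$, which is the claim; in the special case $|\phi_2|=O(1/W)$ it reduces to $\exp\{O(n/W^2)\}=1+o(1)$, since $n/W^2=O(N^{-\theta})$.

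For part (ii) I would first note that $s_i$ is a polynomial in $w:=\{\tilde a\vphantom{b}_j/W,\tilde b_j/W\}$ with $s_i|_{w=0}=0$, so $\Phi_1(s(w))-\Phi_1(0)$ and $\prod_j(1+q(s_j(w)))-1$ are power series in $w$ without constant term, related by $\prec$ by hypothesis; since $q$ has non-negative coefficients, replacing the coefficients of $s$ by their absolute values sends $\prod_j(1+q(s_j))-1$ to $\prod_j(1+q(s_j^*))-1$, which has non-negative coefficients, and by (\ref{prop}) the latter still dominates the absolute-value majorant of $\Phi_1(s)-\Phi_1(0)$. It remains to transfer this coefficientwise domination through the average: since $|\mu_\gamma(x)|=\mu_{\Re\gamma}(x)$, for any power series $P$ one has $|\langle P(w)\rangle_0|\le(Z_{\delta,\Re\gamma}/|Z_{\delta,\gamma}|)\langle P^*\rangle_{0,*}$ with $P^*$ the majorant of $P$, the normalization ratio being controlled by Lemma \ref{l:okr}; and because the monomial moments of $\langle\,\cdot\,\rangle_{0,*}$ are non-negative, $\langle P^*\rangle_{0,*}$ is obtained simply by substituting these moments, so $\prec$ is preserved under $\langle\,\cdot\,\rangle_{0,*}$. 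Taking $P=\Phi_1(s)-\Phi_1(0)$ and absorbing the normalization discrepancy together with the truncation error of Lemma \ref{l:okr}(2) into the additive term $e^{-Cn/W}$ yields the statement.

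The step I expect to be the main obstacle is the estimate $\langle e^{3R}\rangle_{0,*}=1+o(1)$ in part (i): one must show that the degree-$\ge3$ terms contribute only a $1+o(1)$ factor, which fails if one merely bounds them pointwise (that produces a spurious $\exp\{C\delta n/W\}$, which is not small for small $\theta$), and instead requires exploiting the vanishing of the odd-degree moments and a genuine control of the fluctuations of $R$; the truncation to $[-\delta W,\delta W]^N$ is essential throughout, both for the convergence of the series defining $\psi_i$ and $R$ and for replacing truncated by Gaussian moments. In part (ii) the analogous delicate point is that the partial ordering $\prec$ genuinely survives the complex average $\langle\,\cdot\,\rangle_0$, which rests on the sign-definiteness $G_{ij}\ge0$ of the covariance of the associated positive measure.
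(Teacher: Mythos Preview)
The paper does not prove this lemma: it is quoted from \cite{TSh:14}, Lemma~8, and Section~4 states explicitly that ``all proofs can be found in \cite{TSh:14}.'' There is therefore no proof in the present paper to compare against.

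On your sketch itself, there are two genuine gaps. In part~(i) you correctly single out $\langle e^{3R}\rangle_{0,*}=1+o(1)$ as the obstacle, and the appeal to a ``hypercontractivity-type bound'' is not a proof: the pointwise estimate on the box gives only $|R|\le CN\delta^3$, which for $\delta=W^{-\kappa}$ with $\kappa<\theta/8$ is not $o(1)$, and the asserted control of the higher central moments of $R$ is not carried out. Your H\"older splitting $e^{L+Q+R}$ is a reasonable alternative to the term-by-term Wick expansion, but as written the $R$ step is missing.

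In part~(ii) there is a gap you do not flag. Your chain of inequalities actually yields
\[
\big|\langle P\rangle_0\big|\;\le\;\frac{Z_{\delta,\Re\gamma}}{|Z_{\delta,\gamma}|}\,\big\langle P^*(|w|)\big\rangle_{0,*},
\]
whereas the lemma asks for $\langle P^*(w)\rangle_{0,*}$ on the right; since $\langle|w|^\alpha\rangle_{0,*}\ge\langle w^\alpha\rangle_{0,*}$ whenever some component of $\alpha$ is odd, the non-negativity of monomial moments you cite points the wrong way here. Closing this requires a direct comparison of the complex and real moments---for instance, pass to the untruncated Gaussian via Lemma~\ref{l:okr}(2)--(3) (absorbing the error into $e^{-Cn/W}$) and establish the entrywise bound $|G(\gamma)_{ij}|\le G(\Re\gamma)_{ij}$ for the one-dimensional Green's function, so that every Wick pairing satisfies $|\langle w^\alpha\rangle|\le\langle w^\alpha\rangle_*$. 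Without such a step part~(ii) is incomplete.
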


\subsection{Integration over the symplectic group $\mathring{Sp}(2)$}
\begin{proposition}\label{p:Its-Z}
(i) Let $C$ be a normal $2\times 2$ matrix with distinct eigenvalues $c_1$, $c_2$ and
$D=\hbox{diag}\{d_1,d_2\}$, $d_i\in \mathbb{C}$. Then
\begin{equation}\label{Its-Zub}
\int_{U(2)} \exp\{t\Tr CU^*DU\} d\,\mu(U)=
\dfrac{e^{t(c_1d_1+c_2d_2)}-e^{t(c_1d_2+c_2d_1)}}{t(c_1-c_2)(d_1-d_2)},
\end{equation}
where $t\in \mathbb{C}$ is some constant.

(ii) Let
\begin{equation}\label{F_sympl}
F=\left(\begin{array}{cc}
X&w_2\sigma\\
-\overline{w}_2\sigma&X^t
\end{array}\right), \quad
X=\left(\begin{array}{cc}
x&w_1\\
\overline{w}_1&y
\end{array}\right),
\end{equation}
\begin{equation*}
G=\left(\begin{array}{cc}
D&0\\
0&D
\end{array}\right),\quad
D=\left(\begin{array}{cc}
d_1&0\\
0&d_2
\end{array}\right),
\end{equation*}
where $\sigma$ is defined in (\ref{s-s'}), $x,y\in \mathbb{R}$, $w_1, w_2, d_1, d_2\in \mathbb{C}$.
The matrices of the form (\ref{F_sympl}) can be diagonalized by $\mathring{Sp}(2)$ transformation $P$ and
have two real eigenvalues $a, b$ of multiplicity two. Moreover, the measure
\[
dF=dx\,dy\,d\Re w_1\,d\Im w_1\,d\Re w_2\,d\Im w_2,
\]
can be represented in the form
\begin{equation*}
\dfrac{\pi^2}{12}(a-b)^4d\nu(P)
\end{equation*}
with
\begin{equation}\label{nu}
d\nu(P)=3(1-2|V_{12}|^2)^2d\mu(U)d\mu(V).
\end{equation}
Here $d\mu$ is a Haar measure over $\mathring{U}(2)$,
\begin{equation*}
P=\left(\begin{array}{cc}
V&0\\
0&\overline{V}
\end{array}\right)\cdot
\left(\begin{array}{cc}
\cos\varphi\cdot I&\sin\varphi\cdot e^{i\alpha}\cdot \sigma'\\
-\sin\varphi\cdot e^{-i\alpha}\cdot \sigma'&\cos\varphi\cdot I
\end{array}\right)
\end{equation*}
and
\begin{equation*}
U=\left(\begin{array}{cc}
\cos\varphi&\sin\varphi\cdot e^{i\alpha}\\
-\sin\varphi\cdot e^{-i\alpha}&\cos\varphi
\end{array}\right), \quad
V=\left(\begin{array}{cc}
\cos\phi&\sin\phi\cdot e^{i\beta}\\
-\sin\phi\cdot e^{-i\beta}&\cos\phi
\end{array}\right).
\end{equation*}

Moreover, if $\tilde{t}=t(c_1-c_2)(d_1-d_2)$, then
\begin{multline}\label{Its-Zub_s1}
\int_{\mathring{Sp}(2)} \exp\{t\Tr GP^*HP/2\} d\,\nu(P)\\
=\dfrac{6}{\tilde{t}^2}\,
\Big(e^{t(c_1d_1+c_2d_2)}\big(1-2/\tilde{t}\big)+
e^{t(c_1d_2+c_2d_1)}\big(1+2/\tilde{t}\big)\Big),
\end{multline}
In addition,
\begin{align}\label{Its-Zub_s}
&\int\limits_\Omega \exp\Big\{-\dfrac{t}{4}\,\Tr (F-G)^2\Big\}\, \Phi(F) dF\\ \notag
&=\dfrac{\pi^2}{t^2}\int\limits_{\hat{\Omega}} \exp\Big\{-\frac{t}{2}\,\Tr(\hat{Y}-D)^2\Big\}
\cdot\Phi(\hat{Y})\cdot \dfrac{(y_1-y_2)^2}
{(d_1-d_2)^2}\\ \notag
&\times \Big(1-\dfrac{2}{t(y_1-y_2)(d_1-d_2)}\Big)\,dy_1\,dy_2,
\end{align}
where $y_1$, $y_2$ are eigenvalues of $F$, $\hat{Y}=\hbox{diag}\,\{y_1,y_2\}$, and
\[
dF=dx\,dy\,d\Re w_1\,d\Im w_1\,d\Re w_2\,d\Im w_2.
\]
Here $\Phi(F)$ is any function which is invariant over $\mathring{Sp}(2)$ transformation (i.e., depend only on
$y_1$, $y_2$), $\Omega$
is any $\mathring{Sp}(2)$ invariant domain such that the eigenvalues of $F$ of the form (\ref{F_sympl}) run
over the symmetric
domain $\hat{\Omega}$. 
\end{proposition}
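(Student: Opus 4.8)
The statement to prove is Proposition~\ref{p:Its-Z}, which collects several facts about integrals over $\mathring{Sp}(2)$ (and, as a warm-up, part (i) over $U(2)$). Here is how I would organize the proof.

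\smallskip

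\textbf{Part (i): the $U(2)$ Itzykson--Zuber formula.} This is the classical HCIZ integral for $n=2$, so the plan is to give the short self-contained derivation rather than quote it. Parametrize $U\in\mathring{U}(2)$ by $U=\left(\begin{smallmatrix}\cos\varphi&\sin\varphi\,e^{i\alpha}\\-\sin\varphi\,e^{-i\alpha}&\cos\varphi\end{smallmatrix}\right)$ with $\varphi\in[0,\pi/2]$, $\alpha\in[0,2\pi)$, and normalized Haar measure $d\mu(U)=\tfrac{1}{\pi}\sin\varphi\cos\varphi\,d\varphi\,d\alpha$. A direct computation gives $\Tr\, CU^*DU = c_1(d_1\cos^2\varphi+d_2\sin^2\varphi)+c_2(d_1\sin^2\varphi+d_2\cos^2\varphi)$ once one uses that $C$ is normal and that only the eigenvalues matter (so one may take $C=\mathrm{diag}\{c_1,c_2\}$ after absorbing the diagonalizing unitary into $U$). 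Substituting $u=\cos^2\varphi$ turns the integral into $\int_0^1 e^{t[(c_1d_1+c_2d_2)u+(c_1d_2+c_2d_1)(1-u)]}\,du$, which integrates elementarily to the right-hand side of (\ref{Its-Zub}). The only mild care is the $t(c_1-c_2)(d_1-d_2)\to0$ removable singularity, handled by analyticity in $t$.

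\smallskip

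\textbf{Part (ii): the change of variables and the measure (\ref{nu}).} First verify that a matrix $F$ of the form (\ref{F_sympl}) is self-dual quaternion Hermitian, has two doubly degenerate real eigenvalues $a,b$, and that the stated factorization $P=\mathrm{diag}\{V,\overline V\}\cdot R(\varphi,\alpha)$ parametrizes $\mathring{Sp}(2)$; here $R(\varphi,\alpha)$ is the $4\times4$ ``rotation'' block written in the statement. The Jacobian computation $dF=\tfrac{\pi^2}{12}(a-b)^4\,d\nu(P)$ with $d\nu(P)=3(1-2|V_{12}|^2)^2\,d\mu(U)\,d\mu(V)$ is the technical heart: I would compute $dF$ by differentiating $F=P^*A_{2,2}P$ (with $A_{2,2}=\mathrm{diag}\{a,b,a,b\}$) in the six real coordinates, express the result through the tangent-space coordinates of $\mathring{Sp}(2)$ and the eigenvalue differentials $da,db$, and take the determinant; the factor $(1-2|V_{12}|^2)^2=(\cos^2\phi-\sin^2\phi)^2$ emerges from the $V$-dependence of the off-diagonal generators. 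This can alternatively be read off from the known volume element on the space of self-dual quaternion matrices (Mehta, Ch.~2.4 / Ch.~3), which is the route I would take to keep it short: reduce to Mehta's formula and then peel off the $U(2)$ Haar factor explicitly to isolate $(1-2|V_{12}|^2)^2$. \emph{This Jacobian identification is the step I expect to be the main obstacle} — everything downstream is a consequence of it.

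\smallskip

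\textbf{Part (ii), continued: the three integral identities.} Formula (\ref{Its-Zub_s1}) follows by inserting the parametrization (\ref{nu}): the $R(\varphi,\alpha)$-part of $P$ conjugates $G=\mathrm{diag}\{D,D\}$ into something whose trace against $H$ produces exactly the $U(2)$-type exponent handled in part (i), while the leftover $\mathrm{diag}\{V,\overline V\}$-integration against the weight $3(1-2|V_{12}|^2)^2$ is a one-dimensional integral $\int_0^1 3(2u-1)^2(\cdots)\,du$ over $u=\cos^2\phi$; carrying out both integrations and simplifying gives the claimed $\tfrac{6}{\tilde t^2}\big(e^{t(c_1d_1+c_2d_2)}(1-2/\tilde t)+e^{t(c_1d_2+c_2d_1)}(1+2/\tilde t)\big)$ — the rational prefactors $1\mp2/\tilde t$ come precisely from the $(2u-1)^2$ weight, via $\int_0^1 3(2u-1)^2 e^{\tilde t u}\,du$ after the substitution. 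Then (\ref{Its-Zub_s}): start from $\int_\Omega e^{-\frac t4\Tr(F-G)^2}\Phi(F)\,dF$, write $\Tr(F-G)^2=\Tr(A_{2,2}-G)^2 = 2\Tr(\hat Y-D)^2$ using that $\Phi$ and the quadratic form depend only on the eigenvalues $y_1,y_2$ of $F$, apply the change of variables to get the factor $\tfrac{\pi^2}{12}(y_1-y_2)^4$ together with $\int_{\mathring{Sp}(2)}e^{-\frac t4\,[\text{cross term}]}d\nu(P)$; the cross term is $\Tr\big((F-G)$-linear part$\big)$ which after expanding is of exactly the form $\tfrac t2\Tr GP^*HP$ with $H$ the eigenvalue-diagonal matrix, so one invokes (\ref{Its-Zub_s1}) with $c_i\leftrightarrow y_i$. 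Collecting the prefactors: $\tfrac{\pi^2}{12}(y_1-y_2)^4\cdot\tfrac{6}{t^2(y_1-y_2)^2(d_1-d_2)^2}\cdot(\cdots)=\tfrac{\pi^2}{t^2}\,\tfrac{(y_1-y_2)^2}{(d_1-d_2)^2}\big(1-\tfrac{2}{t(y_1-y_2)(d_1-d_2)}\big)$ once one keeps only the surviving exponential branch — this matches (\ref{Its-Zub_s}). A final remark is needed that the various removable singularities at coinciding eigenvalues/parameters are handled by continuity in $t$ and in $d_1,d_2$.
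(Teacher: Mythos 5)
Your proposal is correct in its conclusions but follows a genuinely different route from the paper, most visibly for the central identity (\ref{Its-Zub_s}). You plan to first establish the Jacobian identity $dF=\tfrac{\pi^2}{12}(a-b)^4 d\nu(P)$ with $d\nu(P)=3(1-2|V_{12}|^2)^2 d\mu(U)d\mu(V)$, then use it to rewrite $\int_\Omega e^{-\frac t4\Tr(F-G)^2}\Phi\,dF$ as an eigenvalue integral times $\int_{\mathring{Sp}(2)}e^{\frac t2\Tr GP^*\hat Y_4P}d\nu(P)$, and finally invoke (\ref{Its-Zub_s1}). The paper never does the $\mathring{Sp}(2)$ Jacobian directly: it exploits the algebraic identity $U_0\sigma U_0^t=\sigma$ for the special-form $U_0\in\mathring U(2)$, so that conjugating $F$ by $\mathrm{diag}\{U_0,\overline{U_0}\}$ diagonalizes the $X$-block while leaving $w_2\sigma$ untouched. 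This reduces the six-variable $F$-integral to an integral over a genuine $2\times2$ Hermitian matrix $Y=\left(\begin{smallmatrix}x_1&w_2\\\overline{w}_2&x_2\end{smallmatrix}\right)$ via one application of (\ref{Its-Zub}), and then a second diagonalization $Y=V^*\hat YV$ plus another application of (\ref{Its-Zub}) produces the $(1-2|V_{12}|^2)$-weight and the final rational factors. The payoff of the paper's route is that the $\mathring{Sp}(2)$ Jacobian — which you correctly flag as the main obstacle of your approach — never has to be computed; it is effectively absorbed into two standard $U(2)$ Jacobians. Two small inaccuracies in your write-up of (\ref{Its-Zub_s1}): after inserting the parametrization, the $U$- and $V$-integrations do not literally decouple into "a one-dimensional integral against the weight," because the exponent depends on the combination $s=uv+(1-u)(1-v)$ with $u=\cos^2\varphi$, $v=\cos^2\phi$; one must first do the $u$-integral (elementary, yielding $\frac{e^{\tilde tv}-e^{\tilde t(1-v)}}{\tilde t(2v-1)}$) and then the weighted $v$-integral, which is what produces the $1\mp2/\tilde t$ corrections. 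And in passing from (\ref{Its-Zub_s1}) to (\ref{Its-Zub_s}) one does not "keep only the surviving exponential branch"; one uses the $y_1\leftrightarrow y_2$ symmetry of $\hat\Omega$ and of $\Phi$ to fold the second branch onto the first, doubling it. These are presentational, not logical, and your route goes through.
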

The proof of this proposition can be found in Sec.6.

\section{Proof of the main theorem}
In this section we will prove Theorem \ref{thm:1} applying the steepest descent method
to the integral representation (\ref{F_0_1}).

\subsection{The bound for $\Sigma_c$}
\begin{lemma}\label{l:2} Let $\Sigma_c$ be the part of the integral in (\ref{F_0_1}) over the complement
of the domain $\Omega_\delta$, which is defined in (\ref{Om_delta}). Then
\begin{equation*}
|\Sigma_c|\le C_1W^{-6N+4}(24\pi)^{N}e^{-2Nc_0} e^{-C_2W^{1-2\kappa}},
\end{equation*}
where $\kappa<\theta/8$ and $c_0=\Re f(a_\pm)$.
\end{lemma}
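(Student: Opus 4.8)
\textbf{Proof plan for Lemma \ref{l:2}.}

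The plan is to bound $|\Sigma_c|$ by inserting the absolute value inside the integral in (\ref{F_0_1}) and then performing the integration over the symplectic groups exactly, using formula (\ref{Its-Zub_s1}) of Proposition \ref{p:Its-Z}. First I would observe that the ``coupling'' factor $\exp\{-\frac{W^2}{4}\sum_j\Tr(Q_j^*A_{j,4}Q_j-A_{j-1,4})^2\}$ has modulus bounded by $\exp\{-\frac{W^2}{4}\sum_j\Tr(\Re(Q_j^*A_{j,4}Q_j)-\Re A_{j-1,4})^2\}$ plus a manageable imaginary cross-term; after taking absolute values the remaining $P_{-n}$-dependence sits only in the ``bad'' term $\exp\{-\frac{i}{2N\rho(\lambda_0)}\sum_j\Tr(R_jP_{-n})^*A_{j,4}(R_jP_{-n})\hat\xi_4\}$, whose modulus is $\exp\{O(N\cdot 1/N)\}=O(1)$ uniformly (since the $a_j,b_j$ are bounded on any bounded region and $\hat\xi_4$ is $O(1)$), because the argument of the exponential is purely imaginary up to terms of order $|\Im a_j|=O(1/N)$. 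This removes the $P_{-n}$ integral at the cost of a harmless constant, and successively removes the $Q_p$ integrals via (\ref{Its-Zub_s1}), each of which contributes at most a polynomial-in-$W$ factor times $\exp\{\Re[\text{something}]\}$ that is absorbed into the product of Gaussian weights.

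Next I would reduce the resulting expression to the scalar integral over $\{a_j\},\{b_j\}$ that already appeared in \cite{TSh:14}. After the symplectic integrations the integrand's modulus is, up to the $(24\pi)^N$ prefactor and a factor $W^{-6N+O(1)}$ coming from the Vandermonde powers $(a_l-b_l)^4$ and the $\tilde t^{-2}$ denominators in (\ref{Its-Zub_s1}) (rescaling $a,b$ near the saddles by $W$), bounded by
\[
\exp\Big\{-\sum_{j=-n}^n\big(f_*(a_j)+f_*(b_j)\big)-2Nc_0-\frac{W^2}{4}\sum_j\big((a_j-a_{j-1})^2+(b_j-b_{j-1})^2\big)\Big\}
\]
times lower-order polynomial corrections, where $f_*$ is the function defined in (\ref{f}). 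This is exactly the structure controlled in \cite{TSh:14}: on the complement of $\Omega_\delta$ at least one of the $a_j$ or $b_j$ lies at distance $\ge\delta$ from both saddle points $a_\pm$, so $f_*$ at that site is bounded below by $c\delta^2=cW^{-2\kappa}$, and the nearest-neighbor quadratic coupling together with part 2 of Lemma \ref{l:okr} (the $\mu_\gamma$-measure estimates, applied with $\gamma=c_\pm$ and with the rescaled variables) propagates this into a global gain $e^{-C_2\delta^2 W}=e^{-C_2W^{1-2\kappa}}$ over the normalization. Collecting the prefactors $W^{-6N+4}$, $(24\pi)^N$, $e^{-2Nc_0}$ and this gain yields the claimed bound.

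The main obstacle I anticipate is controlling the symplectic integration when the eigenvalues $a_j,b_j$ are \emph{not} near a saddle, so that the contour of $P_{-n}$, $Q_p$ does not pass through a stationary point: one must check that formula (\ref{Its-Zub_s1}) still gives only polynomial-in-$W$ (not exponential) growth there, i.e. that the factors $\tilde t^{-2}=\big(t(c_1-c_2)(d_1-d_2)\big)^{-2}$ are not dangerously small and that $\Re\big(t(c_1d_1+c_2d_2)\big)$ is dominated by the Gaussian weight $-\frac14\Tr(A_{j,4}+i\Lambda_{0,4}/2)^2$ we already extracted. This is precisely the place where the real-symmetric case differs from \cite{TSh:14}: the Itzykson–Zuber-type answer over $\mathring{Sp}(2)$ carries the extra $(1\mp 2/\tilde t)$ factors, but since these are $1+O(1/W)$ after the $W$-rescaling near any relevant region and at worst polynomially large elsewhere (being rational in the eigenvalue differences, with denominators that can be absorbed by splitting off a small neighborhood of the diagonal $a_j=b_j$ and handling it separately by a crude bound), the argument of \cite{TSh:14}, Lemma for $\Sigma_c$, goes through essentially verbatim once the symplectic integration has been carried out.
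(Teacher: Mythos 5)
Your plan follows the same route as the paper: insert the absolute value inside the integral, observe that the $P_{-n}$-dependent phase factor has unit modulus so the $P_{-n}$ integral drops out, perform the symplectic integrations via the Itsykson--Zuber formula of Proposition~\ref{p:Its-Z}, bound the remaining rational corrections, rescale by $W$, and reduce to the scalar $(a,b)$-integral that was already controlled in \cite{TSh:14}, eq.~(5.5). This is exactly what the paper does, so the overall architecture is right.

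However, you introduce some fictitious difficulties that are not actually present. After the shift made in Proposition~\ref{p:int_repr} the matrices $F_j$ are the real quaternion-self-dual matrices of (\ref{F_j}), so the eigenvalues $a_j,b_j$ are \emph{real}: there is no ``$\Im a_j = O(1/N)$'' to estimate. Since $A_{j,4}$ is real diagonal and $R_jP_{-n}$ is unitary, $\Tr(R_jP_{-n})^*A_{j,4}(R_jP_{-n})\hat\xi_4$ is exactly real, and the modulus of the ``bad'' factor is identically $1$ (not merely $O(1)$). Likewise $Q_j^*A_{j,4}Q_j-A_{j-1,4}$ is Hermitian, so the coupling weight $\exp\{-\tfrac{W^2}{4}\sum\Tr(\cdot)^2\}$ is a genuine positive real number $\le 1$; there is no ``imaginary cross-term'' to decompose. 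These misreadings do not break your argument, but they make you work around problems that do not exist.

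A second imprecision: you characterize $\Omega_\delta^C$ as ``at least one $a_j$ or $b_j$ lies at distance $\ge\delta$ from both saddle points.'' That is not the complement of (\ref{Om_delta}): a configuration can have every variable within $\delta$ of a saddle yet still lie outside $\Omega_\delta$, e.g.\ if some chain $\{a_j\}$ or $\{b_j\}$ jumps between $U_\delta(a_+)$ and $U_\delta(a_-)$, or if the $(a_j,b_j)$ sign pattern is not globally consistent with one of $\Omega_\delta^+,\Omega_\delta^-,\Omega_\delta^\pm$. In those cases the decay comes entirely from the nearest-neighbor penalty $\tfrac{W^2}{2}(a_j-a_{j-1})^2\gtrsim W^2(a_+-a_-)^2\gg W^{1-2\kappa}$, not from $f_*$. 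You do mention the NN coupling, but the case analysis should be stated correctly since it is the heart of the complement estimate (this is how \cite{TSh:14}, eq.~(5.5) is actually bounded). On the positive side, your discussion of the $\tilde t^{-2}$ and $(1\mp 2/\tilde t)$ factors near the diagonal $a_j=b_j$ is a legitimate concern that the paper passes over quickly (it asserts the product $\prod_j(1-2/(W^2(a_j-b_j)(a_{j-1}-b_{j-1})))\le e^{CN/W^2}$ without comment on the $a_j\to b_j$ singularity); your suggestion of splitting off a neighborhood of the diagonal and recalling that (\ref{Its-Zub_s1}) itself is regular as $\tilde t\to 0$ is the right way to make that step airtight.

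\end{document}
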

\begin{proof}
According to (\ref{F_0_1}), we have
\begin{align*}
|\Sigma_c|&\le e^{-2Nc_0}\cdot \int\limits_{\Omega_\delta^C}\exp\Big\{-
\sum\limits_{j=-n}^n(f_*(a_j)+f_*(b_j))\Big\}\\
&\times\exp\Big\{-\frac{W^2}{4}\sum\limits_{j=-n+1}^n\Tr
(Q_j^*A_{j,4}Q_j-A_{j-1,4})^2\Big\}\\
&\times\prod\limits_{l=-n}^n(a_l-b_l)^4d\,\nu(P_{-n})\,d\overline{a\vphantom{b}}\,
d\overline{b}\, \prod\limits_{p=-n+1}^nd\nu(R_p),
\end{align*}
where $f_*$ and $c_0$ are defined in (\ref{f}) and (\ref{c_pm}). Here we insert the absolute value inside the integral
and use that
\[
\Big|\exp\Big\{-
\frac{i}{2N\rho(\lambda_0)}\sum\limits_{j=-n}^n\Tr \big(R_jP_{-n}\big)^*A_{j,4}\,
(R_jP_{-n}\big)\hat{\xi}_4\Big\}\Big|=1.
\]
To simplify formulas below, set
\begin{equation}\label{I_0}
I_0= W^{-6N+4}(24\pi)^{N}e^{-2Nc_0}\cdot \left|\mdet^{-1}\left(-\Delta+2c_+/W^2\right)\right|.
\end{equation}
As we will see below, $I_0$ is an order of $\Sigma$ (see Lemma \ref{l:sigma}).
Also recall that, according to Lemma \ref{l:okr} (1),
\begin{equation}\label{in_det}
e^{-C_1N/W}\le \left|\mdet^{-1}\left(-\Delta+2c_+/W^2\right)\right|\le e^{-C_2N/W},
\end{equation}
and that $W^2=N^{1+\theta}$, $\kappa<\theta/8$,
and hence $CN/W\ll W^{1-2\kappa}$.

We are going to prove that
\begin{equation}\label{in_i_0}
|\Sigma_c/I_0|\le e^{-CW^{1-2\kappa}}.
\end{equation}
Using (\ref{Its-Zub_s}), we get (recall that
$A_j=\hbox{diag}\,\{a_j,b_j,a_j,b_j\}$, $j=-n,\ldots,n$ and $\Omega_\delta^C$ is still a symmetric domain)
\begin{align} \notag
&I_0^{-1}\cdot|\Sigma_c|\le \dfrac{12^{N-1}e^{-2Nc_0}}{W^{4(N-1
)}I_0}\int\limits_{\Omega_\delta^C}
\exp\Big\{-\frac{W^2}{2}\sum\limits_{j=-n+1}^n\Big(
(a_j-a_{j-1})^2+
(b_j-b_{j-1})^2\Big)\Big\}\\ \label{sig_vn1}
&\times\exp\Big\{-\sum\limits_{j=-n}^n(f_*(a_j)+f_*(b_j))\Big\}\,
 (a_{-n}-b_{-n})^2(a_n-b_n)^2\\ \notag
&\times \prod\limits_{j=-n}^n\Big(1-\dfrac{2}{W^2(a_j-b_j)(a_{j-1}-b_{j-1})}\Big)
\,d\overline{a\vphantom{b}}\, d\overline{b}\\ \notag
\end{align}
The first line here
is obtained performing
recursively the integral over $Q_j$ starting from $j=n$ and going backwards. At each step the
integral can be written in the form (\ref{Its-Zub}), with a suitable choice of the
function $f$. The last product of (\ref{sig_vn1}) can be bounded by $\exp\{CN/W^2\}$, thus
\begin{align} \notag
&I_0^{-1}\cdot|\Sigma_c|\le\dfrac{12^{N-1}e^{-2Nc_0} \cdot e^{CN/W^2}}{W^{4(N-1)}I_0}\int\limits_{\Omega_\delta^C}
\exp\Big\{-\frac{W^2}{2}\sum\limits_{j=-n+1}^n\Big(
(a_j-a_{j-1})^2+
(b_j-b_{j-1})^2\Big)\Big\}\\ \label{sig_vn}
&\times\exp\Big\{-\sum\limits_{j=-n}^n(f_*(a_j)+f_*(b_j))\Big\}\,
 (a_{-n}-b_{-n})^2(a_n-b_n)^2\,d\overline{a\vphantom{b}}\, d\overline{b}\\ \notag
&\le C\cdot W^4 \cdot (2\pi)^{-N} e^{C_1N/W}\int\limits_{W\Omega_\delta^C}\exp\Big\{-\frac{1}{2}
 \sum\limits_{j=-n+1}^n\Big(
(a_j-a_{j-1})^2+(b_j-b_{j-1})^2\Big)\Big\}\\ \notag
&\times\exp\Big\{-\sum\limits_{j=-n}^n(f_*(a_j/W)+f_*(b_j/W))\Big\}\,
 (a_{-n}-b_{-n})^2(a_n-b_n)^2\,d\overline{a\vphantom{b}}\, d\overline{b},
\end{align}
where $f_*$ and $c_0$ are defined in (\ref{f}) and (\ref{c_pm}). Here in the third line we did the change
$a_j\to a_j/W$, $b_j\to b_j/W$ and used (\ref{I_0}) -- (\ref{in_det}).

Now the last integral in (\ref{sig_vn}) is the same as in \cite{TSh:14}, eq. (5.5) and so
can be bounded in the same way.

\end{proof}

\subsection{Calculation of $\Sigma$}
\begin{lemma}\label{l:sigma}
For the integral $\Sigma$ over the domain $\Omega_\delta$ (see (\ref{Om_delta})) we have
\begin{align}\label{sigma}
\Sigma&=\dfrac{8 \pi^4\rho(\lambda_0)^4 e^{-2Nc_0}(24\pi)^{N}}{3W^{6N-4}}\cdot
DS(\pi(\xi_1-\xi_2))
\cdot\Big|\mdet^{-1}\Big(-\Delta+\frac{2c_+}{W^2}\Big)\Big|(1+o(1))\\ \notag
&=8(\pi\rho(\lambda_0))^4/3\cdot DS(\pi(\xi_1-\xi_2))\cdot I_0 ,\quad W\to\infty,
\end{align}
where $I_0$ is defined in (\ref{I_0}).
\end{lemma}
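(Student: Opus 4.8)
The plan is to follow the roadmap laid out in Section 3, reducing the computation of $\Sigma$ to a few localized integrals over $\Omega_\delta^+$, $\Omega_\delta^-$ and $\Omega_\delta^\pm$, with the contribution of $\Omega_\delta^\pm$ being dominant. First I would split $\Sigma = \Sigma_+ + \Sigma_- + \Sigma_\pm$ according to the three pieces of $\Omega_\delta$ in (\ref{Om_delta}). On $\Omega_\delta^\pm$ I would use the shift $P_j \to \mathrm{diag}(\sigma',\sigma')\,P_j$ to rotate the mixed-type domains so that, at the cost of an overall factor $2^N$, the integral becomes an integral over the $\delta$-neighborhood of the single point $(\overline a_+,\overline a_-)$, and then rescale $a_j = a_+ + \tilde a_j/W$, $b_j = a_- + \tilde b_j/W$ with $|\tilde a_j|,|\tilde b_j|\le \delta W$. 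Using the expansion (\ref{f_exp}), the quadratic parts $c_\pm$ produce, after the rescaling, precisely the Gaussian weights $\mu_{c_+}$ in $\{\tilde a_j\}$ and $\mu_{c_-}$ in $\{\tilde b_j\}$ in the sense of (\ref{mu}), while the cubic and higher terms $\varphi_\pm$ contribute only $O(1/W)$-type corrections controlled by Lemma \ref{l:okr} and Lemma \ref{l:maj}(i).

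The heart of the argument is the treatment of the ``bad'' term
\[
\exp\Big\{-\frac{i}{2N\rho(\lambda_0)}\sum_{j=-n}^n \Tr\big(R_jP_{-n}\big)^*(L_4+\widetilde A_{j,4}/W)\big(R_jP_{-n}\big)\hat\xi_4\Big\},
\]
where $R_j = \prod_{s=j}^{-n+1}Q_s$. As in \cite{TSh:14}, I would expand this exponential into a power series, and for each summand (which is analytic in $\{\tilde a_j/W\},\{\tilde b_j/W\}$ and in the $Q_s$-entries) establish a bound on its Taylor coefficients in $\{\tilde a_j/W\},\{\tilde b_j/W\}$ — this is exactly the content of Lemma \ref{l:un}. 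The new difficulty relative to the Hermitian case is that the $Q_s$ now range over $\mathring{Sp}(2)$ rather than $\mathring{U}(2)$; here I would use the parametrization and the measure decomposition $d\nu(P) = 3(1-2|V_{12}|^2)^2\,d\mu(U)\,d\mu(V)$ from Proposition \ref{p:Its-Z}(ii), together with the Harish-Chandra--Itzykson--Zuber-type formulas (\ref{Its-Zub}) and (\ref{Its-Zub_s1}), to perform the iterated integrals over $Q_{-n+1},\dots,Q_n$ explicitly. These bounds produce a majorant $\mathcal F^*$ for the angular integral $\mathcal F$ in the sense of (\ref{major})--(\ref{prop}), so that by Lemma \ref{l:maj}(ii) the averaging against the complex Gaussian measure $\mu_{c_\pm}$ can be replaced by the averaging of the majorant against the positive measure $\mu_{\Re c_\pm}$, where Lemma \ref{l:maj}(i) shows the correction is $1+o(1)$. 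The upshot, as in Lemma \ref{l:sig}, is that to leading order all $Q_s$ in the bad term may be replaced by $I$, reducing $\Sigma_\pm$ to the Gaussian integral in $\{\tilde a_j\},\{\tilde b_j\}$ — which contributes the determinant factor $|\mdet^{-1}(-\Delta+2c_+/W^2)|$ via Lemma \ref{l:okr}(1) — times the single remaining angular integral $\int_{\mathring{Sp}(2)}\exp\{-\frac{i}{2N\rho}\sum_j \Tr P_{-n}^*L_4 P_{-n}\hat\xi_4\}\,d\nu(P_{-n})$.

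This last $\mathring{Sp}(2)$ integral is where the kernel $DS$ of (\ref{dS}) appears: the exponent is linear in $\sum_j L_4$, i.e. essentially $N$ copies of $L_4 = \mathrm{diag}(a_+,a_-,a_+,a_-)$, so with $t \sim 1/\rho(\lambda_0)$, $c_1-c_2 = a_+-a_- = 2\pi\rho(\lambda_0)$, $d_1-d_2 = \xi_1-\xi_2$ the formula (\ref{Its-Zub_s1}) gives, after collecting the $\tilde t = t(c_1-c_2)(d_1-d_2) \to \pi(\xi_1-\xi_2)$ prefactors and the Jacobian $(a_l-b_l)^4 \to (2\pi\rho(\lambda_0))^{4N}\cdot(\dots)$, precisely the combination $3(\sin x/x^3 - \cos x/x^2)$ with $x = \pi(\xi_1-\xi_2)$, together with the numerical constant $8(\pi\rho(\lambda_0))^4/3$ and the factors $e^{-2Nc_0}(24\pi)^N W^{-(6N-4)}$ bookkept from $C(\xi)$, the $\sum f(a_j)+f(b_j) = 2Nc_0 + (\text{fluctuations})$, and the overall normalization. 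Finally I would invoke Lemma \ref{l:sig_+} to show $|\Sigma_+| + |\Sigma_-| = o(|\Sigma_\pm|)$ — on $\Omega_\delta^+$ and $\Omega_\delta^-$ all eigenvalues cluster near a single $a_\pm$, the Vandermonde $(a_l-b_l)^4$ is then small and the angular integral loses the $(c_1-c_2)^{-2}$ enhancement — so that $\Sigma = \Sigma_\pm(1+o(1))$, which is (\ref{sigma}). I expect the main obstacle to be the coefficient-wise estimate on the expanded bad term over products of symplectic matrices (Lemma \ref{l:un}): unlike $U(2)$, the $\mathring{Sp}(2)$ integrals involve the extra weight $(1-2|V_{12}|^2)^2$ and the nontrivial block structure of $P$, so verifying that the HCIZ-type reductions still telescope and yield the same type of majorant as in \cite{TSh:14} is the delicate point.
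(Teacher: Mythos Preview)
Your proposal is correct and follows essentially the same approach as the paper: split $\Sigma=\Sigma_++\Sigma_-+\Sigma_\pm$, compute $\Sigma_\pm$ by rescaling around $(\overline a_+,\overline a_-)$, replacing all $Q_s$ by $I$ in the bad term via the majorant argument of Lemma~\ref{l:un} and Lemma~\ref{l:maj}, then integrating over $d\eta$ and over $P_{-n}$ using Proposition~\ref{p:Its-Z} to extract $DS(\pi(\xi_1-\xi_2))$, and finally invoking Lemma~\ref{l:sig_+} for the subleading pieces. The only minor divergence is in your heuristic for Lemma~\ref{l:sig_+}: the paper does not argue via the smallness of the Vandermonde and the angular enhancement directly, but instead reverts to the original $(x_j,y_j,w_{j1},w_{j2})$ coordinates of (\ref{F_j}) and controls the resulting six-variable Gaussian integral by an adaptation of Lemmas~5--6 of \cite{TSh:14}.
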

Note that (\ref{sigma}) together with (\ref{in_i_0}) yield
\[
|\Sigma_c|\le e^{-CW^{1-2\kappa}} |\Sigma|,
\]
which gives (\ref{F_2}).

Now using (\ref{F_2}) and (\ref{sigma}) we get Theorem \ref{thm:1}.

Thus, we are left to compute $\Sigma$. We are going to show that the leading term in $\Sigma$
is given by $\Sigma_\pm$, i.e., that the contributions of $\Sigma_+$ and $\Sigma_-$ are smaller.

\subsubsection{Calculation of $\Sigma_\pm$}
Consider the $\delta$-neighborhood of the point $(\overline{a}_+,\overline{a}_-)$ with $\overline{a}_{\pm}$ of
(\ref{a_pm}) and $\delta=W^{-\kappa}$.

Let us show that
\begin{lemma}\label{l:sig}
For the integral $\Sigma_\pm$ over the domain $\Omega_\delta^\pm$ of (\ref{Om_delta}) we have, as $W\to\infty$
\begin{align*}\notag
\Sigma_{\pm}=\dfrac{8 (\pi\rho(\lambda_0))^4 e^{-2Nc_0}(24\pi)^{N}}{3W^{6N-4}}\cdot
DS(\pi(\xi_1-\xi_2))\cdot \Big|\mdet^{-1}\Big(-\Delta+\frac{2c_+}{W^2}\Big)\Big|(1+o(1)).
\end{align*}
\end{lemma}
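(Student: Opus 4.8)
The plan is to reduce the computation of $\Sigma_\pm$ to a Gaussian integral in the $\{a_j\}$, $\{b_j\}$ variables times a single symplectic integral over $d\nu(P_{-n})$, following the strategy already sketched in Section 3. First I would use the rotation trick with $P_j\to\mathrm{diag}(\sigma',\sigma')P_j$ to replace the integral over $\Omega_\delta^\pm$ by $2^N$ copies of the integral over the $\delta$-neighborhood of $(\overline a_+,\overline a_-)$, and then rescale $a_j\to a_++\tilde a_j/W$, $b_j\to a_-+\tilde b_j/W$ with $|\tilde a_j|,|\tilde b_j|\le\delta W$. After this rescaling the quadratic Gaussian factors $\exp\{-\sum f(a_j)+f(b_j)\}$ become, to leading order, $\exp\{-2Nc_0\}$ times the Gaussian weights $\mu_{c_+}(\{\tilde a_j\})\mu_{c_-}(\{\tilde b_j\})$ of (\ref{mu}), up to the cubic-and-higher corrections $\varphi_\pm$ from (\ref{f_exp}), which contribute only $O(1/W)$ by Lemma~\ref{l:okr} and Lemma~\ref{l:maj}(i). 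The Vandermonde factor $\prod(a_l-b_l)^4$ contributes $(a_+-a_-)^{4N}(1+o(1)) = (2\pi\rho(\lambda_0))^{4N}(1+o(1))$ after rescaling, since $\tilde a_l,\tilde b_l$ are of order $o(W)$ on the effective support of the Gaussian measure.

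The heart of the argument is the treatment of the symplectic integrals. The ``hopping'' term $\exp\{-\tfrac{W^2}{4}\sum\Tr(Q_j^*A_{j,4}Q_j-A_{j-1,4})^2\}$ after rescaling becomes $\exp\{-\tfrac14\sum\Tr(Q_j^*\widetilde A_{j,4}Q_j-\widetilde A_{j-1,4})^2\}$ plus cross terms with $L_4$; the pieces not involving the rescaled variables integrate to constants over $\mathring{Sp}(2)$ by Proposition~\ref{p:Its-Z}(ii), formula (\ref{Its-Zub_s1}). For the remaining ``bad'' exponential involving $\hat\xi_4$, I would expand it in a power series, and for each summand — which is analytic in $\{\tilde a_j/W\},\{\tilde b_j/W\}$ — estimate the Taylor coefficients via Lemma~\ref{l:un} (the lemma referenced in Section 3) to produce a proper majorant $\mathcal F$ in the sense of (\ref{major})--(\ref{prop}). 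Lemma~\ref{l:maj}(ii) then lets me replace the averaging of the analytic $\mathcal F$ against the complex measures $\mu_{c_\pm}$ by the averaging of the majorant against the positive measures $\mu_{\Re c_\pm}$, which is $1+o(1)$ since the linear Taylor coefficients are $O(W^{-1})$ and the quadratic ones $O(W^{-1})$. This shows that to leading order all the $Q_s$ in the ``bad'' term can be replaced by the identity, collapsing the $\hat\xi_4$-dependence to $\exp\{-\tfrac{i}{2N\rho(\lambda_0)}\sum_j\Tr P_{-n}^*L_4 P_{-n}\hat\xi_4\}$, which no longer depends on $j$ beyond the trivial sum giving a factor $N/N = 1$ in front.

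Once the $Q_s$-dependence is removed, the $Q_{-n+1},\dots,Q_n$ integrals each contribute the constant from (\ref{Its-Zub_s1}) evaluated at the appropriate arguments, producing (together with the $(24\pi)^{-N}$ prefactor, the $\pi^2/12$ Jacobians, the $W$-powers from the rescaling $dF_j\to W^{-6}d\tilde a_j\,d\tilde b_j\cdots$, and the Gaussian normalizations $Z_{c_\pm}$ from Lemma~\ref{l:okr}(1)) the prefactor $\tfrac{8(\pi\rho(\lambda_0))^4 e^{-2Nc_0}(24\pi)^N}{3W^{6N-4}}|\mdet^{-1}(-\Delta+2c_+/W^2)|$; here one uses that $|Z_{c_-}|/|Z_{c_+}|=1+o(1)$ since $\Re c_+=\Re c_-$ and by (\ref{otn_mod}). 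The final $d\nu(P_{-n})$ integral of $\exp\{-\tfrac{i}{2N\rho(\lambda_0)}\sum_j\Tr P_{-n}^*L_4 P_{-n}\hat\xi_4\}$ is computed via (\ref{Its-Zub_s1}) with $G=L_4$, $D=\hat\xi$-type diagonal, $t=O(1/N)$; since $L_4$ has eigenvalues $a_\pm=\pm\pi\rho(\lambda_0)$ of multiplicity two and $\hat\xi_4$ has eigenvalues $\xi_1,\xi_2$ of multiplicity two, expanding (\ref{Its-Zub_s1}) in the small parameter $\tilde t = t(a_+-a_-)(\xi_1-\xi_2)\cdot\frac{1}{N}\cdot N \sim \pi(\xi_1-\xi_2)\cdot(\ldots)$ — more precisely, collecting the $N$ identical contributions so that the effective argument is $\pi(\xi_1-\xi_2)$ — yields exactly the kernel $DS(\pi(\xi_1-\xi_2))$ of (\ref{dS}), up to the explicit constant $8(\pi\rho(\lambda_0))^4/3$. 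The main obstacle is the combinatorial bookkeeping in Lemma~\ref{l:un}: controlling, uniformly in the length of the chain and in which $Q_s$ appear, the Taylor coefficients of the multilinear-in-$\{\tilde a_j,\tilde b_j\}$ terms produced by expanding the product $\prod_{s=j}^{-n+1}Q_s$ inside the ``bad'' exponential, so that the majorant of Lemma~\ref{l:maj}(ii) is applicable with the stated coefficient bounds — this is where the orthogonal case genuinely differs from \cite{TSh:14}, because $\Tr GP^*HP/2$ over $\mathring{Sp}(2)$ produces the extra $(1-2|V_{12}|^2)^2$ weight of (\ref{nu}) and the two-step structure $P = (\text{block }V)(\text{rotation }\varphi)$, so the relevant matrix-element estimates must be redone.
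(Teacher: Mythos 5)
Your overall route matches the paper's: rescale, isolate the ``bad'' $\hat\xi_4$-exponential, control it via majorants and Lemma~\ref{l:maj}(ii), then do the $Q_j$-integrals with Proposition~\ref{p:Its-Z} and the final $P_{-n}$-integral with (\ref{Its-Zub_s1}) to produce $DS(\pi(\xi_1-\xi_2))$. But there is one genuine gap and a couple of smaller slips.

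The gap is in your description of what Lemma~\ref{l:un} accomplishes. You treat it as a pure Taylor-coefficient / majorant exercise in the variables $\{\tilde a_j/W\},\{\tilde b_j/W\}$, so that after Lemma~\ref{l:maj}(ii) ``all the $Q_s$ in the bad term can be replaced by the identity.'' That does not follow. The majorant machinery only controls the difference between $\Phi_{k_1,\ldots,k_p}(\overline a,\overline b)$ and $\Phi_{k_1,\ldots,k_p}(0,0)$, i.e.\ it lets you set $\tilde a=\tilde b=0$ at $o(1)$ cost. But $\Phi_{k_1,\ldots,k_p}(0,0)$ is still a non-trivial $\eta$-average of products of $\big(S(R_{k_j}P_{-n})-S(P_{-n})\big)$ with $S(P)=|P_{12}|^2+|P_{14}|^2$, and the matrices $Q_s$ appearing in $R_{k_j}$ are still random elements of $\mathring{Sp}(2)$ (now distributed according to $\tilde\eta$, which concentrates near $I$ at scale $1/W$ but is not a point mass). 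Showing that this residual $Q$-integral is $o(1)$ requires a separate estimate — in the paper it is (\ref{phi0})--(\ref{b3}): a Cauchy--Schwarz step followed by an induction over the chain, peeling off one $\tilde Q^0_{k_1}$ at a time, using $\langle S(\tilde Q^0_{k_1})\rangle_{\tilde\eta_{k_1}}\le C/W^2$ and the cross-term cancellation $\langle\cdot H(\tilde Q^0,\tilde R^0)\rangle_{\tilde\eta}=0$, which gives $\langle(S(\tilde R^0_{k_1}P_{-n})-S(P_{-n}))^2\rangle_{\tilde\eta}\le CN/W^2=o(1)$. Without this step your replacement of $Q_s$ by $I$ is unjustified.

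Two smaller points. First, the Vandermonde $\prod_l(a_l-b_l)^4\sim(a_+-a_-)^{4N}$ does not survive intact: each $Q_j$-integral contributes $q_j\big[1-2/t_j+e^{-t_j}(1+2/t_j)\big]\sim 6\,t_j^{-2}\sim 6W^{-4}(a_+-a_-)^{-4}$, so $\prod_{j}q_j$ cancels all but the two boundary factors $\Delta_{-n}^2\Delta_n^2\approx(a_+-a_-)^4$ and supplies the $W^{-4(N-1)}$ needed in $W^{-6N+4}$ — saying ``the pieces not involving the rescaled variables integrate to constants'' obscures both the $W^{-4}$ per link and the Vandermonde cancellation. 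Second, the final $d\nu(P_{-n})$-integral is evaluated \emph{exactly} by (\ref{Its-Zub_s1}) with $t=-i/\rho(\lambda_0)$, $c_{1,2}=\pm\pi\rho(\lambda_0)$, $d_{1,2}=\xi_{1,2}$, giving $\tilde t=-2\pi i(\xi_1-\xi_2)$; there is no expansion in a small parameter (for $\xi_1-\xi_2$ of order one, $\tilde t$ is not small), and the $e^{\pm i\pi(\xi_1-\xi_2)}$ phases produced by the $\Tr L_4\hat\xi_4$ shift exactly cancel the $e^{-i\pi(\xi_1-\xi_2)}$ in the (\ref{F_0_2}) prefactor before $DS$ emerges.
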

\begin{proof}
%
%

Performing the change $a_j-a_+=\tilde{a\vphantom{b}}_j/W$, $b_j-a_-=\tilde{b}_j/W$ in (\ref{F_0_1})
and using (\ref{f_exp}),
we obtain (recall that $a_\pm=\pm\pi\rho(\lambda_0)$)
\begin{align}\notag
\Sigma_{\pm}=&\dfrac{2^{N}e^{-2Nc_0-i\pi(\xi_1-\xi_2)}}{W^{2N}}
\int\limits_{|\tilde{a\vphantom{b}}_j|,|\tilde{b}_j|\le W^{1-\kappa}}
\mu_{c_+}(a)\mu_{c_-}(b)\cdot e^{-\sum\limits_{k=-n}^n\big(\varphi_+(\tilde{a}_k/W)
+\varphi_-(\tilde{b}_k/W)\big)}
\\ \label{F_0_2}
&\times \int\limits_{\mathring{Sp}(2)^N} e^{W^2/2\sum\limits_{j=-n+1}^n\Tr \left(Q_j^*(L_4+\tilde{A}_{j,4}/W)Q_{j}(L_4+\tilde{A}_{j-1,4}/W)-
(L_4+\tilde{A}_{j,4}/W)(L_4+\tilde{A}_{j-1,4}/W)\right)}\\ \notag
& \times
\exp\Big\{-\frac{i}{2N\rho(\lambda_0)}\sum\limits_{k=-n}^n\big(\Tr (R_kP_{-n})^*(L_4+\tilde{A}_{k,4}/W)\,
(R_kP_{-n})\hat{\xi}_4-\Tr L_4\hat{\xi}_4\big)\Big\}
\\ \notag
&\times\prod\limits_{l=-n}^n(a_+-a_-+(\tilde{a\vphantom{b}}_l-\tilde{b}_l)/W)^4d\nu(P_{-n})
\prod\limits_{q=-n+1}^n
d\nu(Q_q)\,d\overline{a\vphantom{b}}
\,d\overline{b},
\end{align}
where $L_4=\hbox{diag}\,\{a_+,a_-,a_+,a_-\}$, $\tilde{A}_{j,4}=\hbox{diag}\,
\{\tilde{a\vphantom{b}}_j,\tilde{b}_j, \tilde{a\vphantom{b}}_j,\tilde{b}_j\}$, and $\mu_{\gamma}(a)$
is defined in (\ref{mu}).

Now we are going to integrate over $\{Q_j\}$. Introduce
\begin{align}\label{F_V}
&F(\overline{\vphantom{b} a},\overline{b},Q)=-\frac{i}{2\rho(\lambda_0)}\sum\limits_{k=-n}^n\big(\Tr
\big(R_kP_{-n}\big)^*(L_4+\tilde{A}_{k,4}/W)\,
(R_kP_{-n}\big)\hat{\xi}_4-\Tr L_4\hat{\xi}_4\big),\\ \notag
&d\,{\vphantom{A}\eta}_j(Q_j,\tilde{A}_j)=e^{\frac{W^2}{2}\Tr \left(Q_j^*
(L_4+\tilde{A}_{j,4}/W)Q_{j}(L_4+\tilde{A}_{j-1,4}/W)-
(L_4+\tilde{A}_{j,4}/W)(L_4+\tilde{A}_{j-1,4}/W)\right)}d\nu(Q_j),\\ \notag
&d\,{\vphantom{A}\eta}(Q,\tilde{A})=\prod\limits_{j=-n+1}^nd\,{\vphantom{A}\eta}_j(Q_j,\tilde{A}_j),\quad
I_{\eta}(\tilde{A})=\int d\,\eta(Q,\tilde{A}),\\ \notag
&t_j=W^2\Big(a_+-a_-+(\tilde{a\vphantom{b}}_j-\tilde{b}_j)/W\Big)\Big(a_+-a_-+
(\tilde{a\vphantom{b}}_{j-1}-\tilde{b}_{j-1})/W\Big),\quad q_j=6/t_j^2.
\end{align}
According to Proposition \ref{p:Its-Z} we have
\begin{equation}\label{int_V}
I_{\eta}(\tilde{A})=\prod\limits_{j=-n+1}^n q_j
\left[1-\dfrac{2}{t_j}+e^{-t_j}
\Big(1+\dfrac{2}{t_j}\Big)\right].
\end{equation}
We want to integrate the r.h.s. of (\ref{F_0_2}) over $d\eta (Q,\tilde{A})$.
To this end, we expand \\$\exp\big\{F(\overline{\vphantom{b} a},\overline{b},Q)\big\}$ into
a series with respect to the elements of $Q_j$, $j=-n+1,\ldots,n$. We are going to show that the leading term of the integral is given by the summands without
any elements of $Q_j$.
\begin{lemma}\label{l:un} In the notations of (\ref{F_V})
\begin{equation}\label{F_ch}
\Big|
\Big\langle \Big\langle\big(\exp\{(F(\overline{\vphantom{b} a},\overline{b},Q)-F(0,0,I))/N\}-1\big)
\cdot\displaystyle\Pi_1\cdot\Pi_2\Big\rangle_{0}\Big\rangle_{\eta}\Big|=o(1),\quad N\to\infty,
\end{equation}
where $\Pi_1$, $\Pi_2$ are the products of the Taylor's series for $\exp\{\varphi_+(\tilde{a}_j/W)\}$ and
for \\ $\exp\{\varphi_-(\tilde{b}_j/W)\}$ and
\begin{equation}\label{ang_mu}
\langle\ldots\rangle_{\eta}
=\Big(\prod\limits_{j=-n+1}^nq_j\Big)^{-1}\intd (\ldots) d\eta(Q,\tilde{A}).
\end{equation}
\end{lemma}
\begin{proof}
Since $\hat{\xi}_4=\frac{\xi_1+\xi_2}{2}\, I_4+\frac{\xi_1-\xi_2}{2}\, L_4$ and
$a_+=-a_-=\pi\rho(\lambda_0)$, we have
\begin{align*}
&\Tr (R_kP_{-n})^*(a_+L_4+\tilde{A}_{k,4}/W)
(R_kP_{-n})\hat{\xi}_4-\Tr (a_+L_4+\tilde{A}_{k,4}/W)\hat{\xi}_4\\
&=\frac{\xi_1-\xi_2}{2}\,\Tr ((R_kP_{-n})^*
(a_+L_4+\tilde{A}_{k,4}/W)\,
(R_kP_{-n})L_4-(a_+L_4+\tilde{A}_{k,4}/W)L_4)\\
&=4\pi\rho(\lambda_0)(\xi_2-\xi_1)(1+
(\tilde{a\vphantom{b}}_k-\tilde{b}_k)/2\pi\rho(\lambda_0)W)\cdot
(|(R_kP_{-n})_{12}|^2+|(R_kP_{-n})_{14}|^2).
\end{align*}
For any $4\times 4$ matrix $P$ introduce
\begin{equation}\label{S_q}
S(P)=|P_{12}|^2+|P_{14}|^2.
\end{equation}
Note that for $P\in Sp(2)$ we have $S(P)\in [0,1]$.

Rewrite
\begin{multline}\label{razn}
F(\overline{\vphantom{b} a},\overline{b},Q)-F(0,0,I)\\
=2i\pi(\xi_1-\xi_2)
\sum\limits_{k=-n+1}^n\left(S(R_kP_{-n})-S(P_{-n})\right)\cdot \Big(1+
\dfrac{\tilde{a\vphantom{b}}_k-\tilde{b}_k}{2\pi\rho(\lambda_0)W}\Big).
\end{multline}
Thus, we get
\begin{multline*}
\Big\langle\exp\Big\{\dfrac{1}{N}\Big(F(\overline{\vphantom{b} a},\overline{b},Q)-F(0,0,I)\Big)\Big\}-1\Big
\rangle_\eta\\
=\sum\limits_{p=1}^\infty
\dfrac{C^p}{p!\,N^p}\sum\limits_{k_1,\ldots,k_p} \Big\langle
\prod\limits_{j=1}^p \Big[\Big(S(R_{k_j}P_{-n})-S(P_{-n})\Big)\cdot \Big(1+
\dfrac{\tilde{a\vphantom{b}}_{k_j}-\tilde{b}_{k_j}}{2\pi\rho(\lambda_0)W}\Big)\Big]\Big\rangle_{\eta},
\end{multline*}
where $\langle\ldots\rangle_{\eta}$ is defined in (\ref{ang_mu}).
Hence, we have to study
\begin{equation}\label{Phi}
\Phi_{k_1,\ldots,k_p}(\overline{\vphantom{b} a},\overline{b})=\Big\langle\prod\limits_{j=1}^p
\Big(S(R_{k_j}P_{-n})-S(P_{-n})\Big)\Big\rangle_{\eta}.
\end{equation}
Let $p<Cn/W$ for some constant $C$. Introduce i.i.d vectors $\{(x_j, y_j)\}$ such that
the density of the distribution has the form
\begin{equation}\label{rho_t}
\rho(x_j,y_j)=4(a_+-a_-)^4\, x_jy_j\, \exp\{-(a_+-a_-)^2[x_j^2+y_j^2]\}\cdot
\mathbf{1}_{0<x_j,y_j<W/2}.
\end{equation}
Introduce matrices
\begin{equation*}
\widetilde{Q}_j=\mathcal{V}_j\cdot \mathcal{U}_j,
\end{equation*}
where
\begin{align}\label{UV}
\mathcal{V}_j&=\left(\begin{array}{cc}
\widetilde{V}_j& 0\\
0 &\overline{\widetilde{V}}_j
\end{array}\right), \quad \widetilde{V}_j=\left(\begin{array}{cc}
\tilde{r}_je^{i\tilde{\sigma}_j}& \tilde{v}_je^{i\sigma_j}\\
-\tilde{v}_je^{-i\sigma_j}&\tilde{r}_je^{-i\tilde{\sigma}_j}
\end{array}\right),\\ \notag
\mathcal{U}_j&=\left(\begin{array}{cc}
\tilde{t}_j e^{i\tilde{\theta}_j}I& \tilde{u}_je^{i\theta_j}\sigma^\prime\\
-\tilde{u}_je^{-i\theta_j}\sigma^\prime &\tilde{t}_j e^{-i\tilde{\theta}_j}I
\end{array}\right)
\end{align}
with
\begin{align*}
\tilde{v}_j&=x_j/p_jW,\quad \tilde{u}_j=y_j(1-2\tilde{v}_j^2)^{-1/2}/p_jW\\
p_j&=\Big(1+\dfrac{\tilde{a\vphantom{b}}_j-\tilde{b}_j}{W(a_+-a_-)}\Big)^{1/2}
\Big(1+\dfrac{\tilde{a\vphantom{b}}_{j-1}-\tilde{b}_{j-1}}{W(a_+-a_-)}\Big)^{1/2},\\
\tilde{r}_j&=(1-\tilde{v}_j^2)^{1/2},\quad \tilde{t}_j=(1-\tilde{u}_j^2)^{1/2}
\end{align*}
and $\theta_j,\tilde{\theta}_j,\sigma_j,\tilde{\sigma}_j\in [-\pi,\pi)$. Define also
\begin{align}\label{eta_til}
d\tilde{\eta}_j=(2\pi)^{-4}\rho(x_j,y_j)dx_j\,d y_j\,
d\theta_j\,d\tilde{\theta}_j\,d\sigma_j\,d\tilde{\sigma}_j,\quad
d\tilde{\eta}=\prod\limits_{j=-n+1}^n d\tilde{\eta}_j.
\end{align}
Note that 
\[
\int d\tilde{\eta}_j=\big(1-e^{-W^2(a_+-a_-)^2/4}\big)^2\le 1.
\]
We need
\begin{lemma}\label{l:phi}
\begin{multline}\label{Phi_tild}
\widetilde{\Phi}_{k_1,\ldots,k_p}(\overline{\vphantom{b}a},\overline{b}):=\Big\langle\prod\limits_{j=1}^p
\Big( S\Big(\tilde{R}_{k_j}\cdot P_{-n}\Big)-S(P_{-n})\Big)\cdot
\prod\limits_{i=-n+1}^n(1-2|(\tilde{V}_i)_{12}|^2)
\Big\rangle_{\tilde{\eta}}\\=
\Phi_{k_1,\ldots,k_p}(\overline{\vphantom{b} a},\overline{b})+O(e^{-cW^2}),
\end{multline}
where $\langle\ldots\rangle_{\tilde{\eta}}$ means the integration over $d\tilde{\eta}$ and
\[
\widetilde{R}_{k_j}=\prod\limits^{-n+1}_{l=k_j}\widetilde{Q}_l.
\]
\end{lemma}
The proof of the lemma can be found in Sec. 6.

Denote
\begin{equation}\label{s_j}
s_j=1-\Big(1+\dfrac{\tilde{a\vphantom{b}}_j-\tilde{b}_j}{W(a_+-a_-)}\Big)
\Big(1+\dfrac{\tilde{a\vphantom{b}}_{j-1}-\tilde{b}_{j-1}}{W(a_+-a_-)}\Big).
\end{equation}
Expanding $\mathcal{V}_j$, $\mathcal{U}_j$ of (\ref{UV}) with respect to $s_j$ we get
\[
\mathcal{V}_j=
\left(\begin{array}{cc}
\widetilde{V}_j(0)&0\\
0&\overline{\widetilde{V}}_j(0)
\end{array}\right)
+\dfrac{x_j}{W}\cdot g_v(s_j)\cdot
\left(\begin{array}{cc}
V_j^{1}&0\\
0&\overline{V}_j^{1}
\end{array}\right)
+\dfrac{x_j^2}{W^2}\sum\limits_{r=1}^\infty
\left(\begin{array}{cc}
V^{(r)}_j&0\\
0&\overline{V}^{(r)}_j
\end{array}\right)
s_j^r,
\]
\begin{align*}
\mathcal{U}_j= \mathcal{U}_j(0)+
\dfrac{y_j}{W}\cdot h_u(s_j)\cdot
\left(\begin{array}{cc}
0&e^{i\theta_j}\sigma'\\
-e^{-i\theta_j}\sigma'&0
\end{array}\right)
+\dfrac{y_j^2}{W^2}\sum\limits_{r=1}^\infty
\left(\begin{array}{cc}
U^{(r)}_j&0\\
0&\overline{U}^{(r)}_j
\end{array}\right)
s_j^r,
\end{align*}
where
\begin{equation}
g_v(s_j)=(1-s_j)^{-1/2}-1,\quad h_u(s_j)=
(1-2x_j^2/W^2)^{-1/2}\Big(\Big(1-\dfrac{s_j}{1-2x_j^2/W^2}\Big)^{-1/2}-1\Big).
\end{equation}
Here $\widetilde{V}_j(0)$, $\mathcal{U}_j(0)$ are unitary matrices (and hence $\|\widetilde{V}_j(0)\|\le 1$,
$\|\mathcal{U}_j(0)\|\le 1$),
\[
\widetilde{V}_j^{1}=\left(
\begin{array}{cc}
0&e^{i\sigma_j}\\
-e^{-i\sigma_j}&0
\end{array}\right),\quad \|\widetilde{V}^{(r)}_j\|\le C^r,\quad \|\widetilde{U}^{(r)}_j\|\le C^r \quad (r=1,2,\ldots),
\]
and $\{\widetilde{V}^{(r)}_j\}$, $\{\widetilde{U}^{(r)}_j\}$ are diagonal matrices.

Since the integrals of $e^{im\theta_j}$ equal 0 for $m\ne 0$ and $2\pi$ for $m=0$, we conclude that if we
replace the coefficients in front of $e^{i\theta_j}$ and $e^{-i\theta_j}$ with the bounds for their absolute values,
then, after the averaging with respect to $\theta_j$, the resulting coefficients in front of $s_j^k$ will
grow. The same is true for the integral with respect to $\sigma_j$. Moreover, for $x_j\in (0,W/2)$
\begin{align*}
g_v(s_j)&\prec g_v^1(s_j^*):=\dfrac{C_1}{1-C_2s_j^*},\\
h_u(s_j)&\prec h_u^1(s_j^*):=\dfrac{C_3}{1-C_4s_j^*}
\end{align*}
where $C_l$, $l=1,\ldots, 4$ are $n$-independent constant and
\begin{equation*}
s_j^*=\dfrac{\tilde{a\vphantom{b}}_j+\tilde{b}_j+
\tilde{a\vphantom{b}}_{j-1}+\tilde{b}_{j-1}}{W(a_+-a_-)}
+\dfrac{(\tilde{a\vphantom{b}}_{j-1}+\tilde{b}_{j-1})
(\tilde{a\vphantom{b}}_j+\tilde{b}_j)}{W^2(a_+-a_-)^2}.
\end{equation*}
Hence,
\begin{multline*}
\widetilde{\Phi}_{k_1,\ldots,k_p}(\overline{\vphantom{b}a},\overline{b})-
\widetilde{\Phi}_{k_1,\ldots,k_p}(0,0)\\
\prec \Big\langle\Big(
\hbox{Prod}_p(x,\sigma)\hbox{Prod}_p(y,\theta)-1\Big)\prod\limits_{j=-n}^n\Big(1-\dfrac{2x_j^2}{W^2}+\dfrac{x_j^2}{W^2}
\cdot\dfrac{s_j^*}{1-s_j^*}\Big)\Big\rangle_{x_j,y_j,\sigma_j,\theta_j},
\end{multline*}
where
\begin{align*}
\hbox{Prod}_p(x,\sigma)=\prod \Big|1+\frac{x_j}{W}\, e^{i\sigma_j} s^*_jg(s^*_j)+\frac{x_j^2}{W^2}s^*_j
g(s^*_j)\Big|^{2p},\\
\hbox{Prod}_p(y,\theta)=\prod \Big|1+\frac{y_j}{W}\, e^{i\theta_j} s^*_jh(s^*_j)+\frac{y_j^2}{W^2}s^*_j
h(s^*_j)\Big|^{2p}
\end{align*}
and $g(t)$ and $h(t)$ are the function of the form $C_1/(1-C_2t)$ with positive $n$-independent
$C_1$, $C_2$ (we denote the set of such function by $\mathcal{L}[t]$).

In addition,
\[
\Big\langle
\dfrac{x_j^{2k}}{W^{2k}}\Big\rangle_{x_j}\le \dfrac{k!}{(a_+-a_-)^{2k}W^{2k}},\quad
\Big\langle
\dfrac{y_j^{2k}}{W^{2k}}\Big\rangle_{y_j}\le \dfrac{k!}{(a_+-a_-)^{2k}W^{2k}},
\]
and thus we conclude
\begin{align*}
\Big\langle\hbox{Prod}_p(x,\sigma)\cdot\prod\limits_{j=-n}^n\Big(1-\dfrac{2x_j^2}{W^2}+\dfrac{x_j^2}{W^2}
\cdot\dfrac{s_j^*}{1-s_j^*}&\Big)\Big\rangle_{x_j,\sigma_j}\\
&\prec
\prod \Big(1+\frac{p}{W^2}s_j^*g_1(s_j^*)+
\frac{p^2}{W^2}(s_j^*)^2
g_1(s_j^*)^2\Big),\\
\Big\langle\hbox{Prod}_p(y,\theta)\Big\rangle_{y_j,\theta_j}
\prec \prod \Big(1+\frac{p}{W^2}s_j^*h_1(s_j^*)+&
\frac{p^2}{W^2}(s_j^*)^2
h_1(s_j^*)^2\Big),
\end{align*}
where $g_1, h_1\in \mathcal{L}[t]$. Hence, we obtain 
\begin{align}\label{maj}
\widetilde{\Phi}_{k_1,\ldots,k_p}(\overline{\vphantom{b}a},\overline{b})-
\widetilde{\Phi}_{k_1,\ldots,k_p}(0,0)
\prec& \Big[\prod \Big(1+\frac{p}{W^2}s_j^*f_1(s_j^*)+
\frac{p^2}{W^2}(s_j^*)^2
f_1(s_j^*)^2\Big)-1\Big]\\ \notag
&+\Big[\prod \Big(1+\dfrac{2}{W^2}+\frac{1}{W^2}\dfrac{s_j^*}{1-s_j^*}\Big)-1\Big]
\end{align}
with some $f_1\in \mathcal{L}[t]$.

Set
\[
\Pi_3=\prod\limits_{j=1}^p \Big(1+
\dfrac{\tilde{a\vphantom{b}}_{k_j}-\tilde{b}_{k_j}}{(a_+-a_-)W}\Big),\quad
\Pi_{3,*}=\prod\limits_{j=1}^p\Big(1+\dfrac{\tilde{a\vphantom{b}}_{k_j}+\tilde{b}_{k_j}}{(a_+-a_-)W}\Big).
\]
Then Lemma \ref{l:maj} and (\ref{maj}) yield
\begin{align}\label{bphi}
&\Big|\Big\langle(\widetilde{\Phi}_{k_1,\ldots,k_p}(\overline{\vphantom{b}a},\overline{b})-
\widetilde{\Phi}_{k_1,\ldots,k_p}(0,0))\cdot
\displaystyle\Pi_1\cdot\Pi_2\cdot \Pi_3\Big\rangle_0\Big|\\ \notag
&\le \Big\langle\Big(\prod \Big(1+\frac{2p}{W^2}s_jf_1(s_j)+\frac{p^2}{W^2}s_j^2
f_1(s_j)^2\Big)-1\Big)
\cdot\displaystyle\Pi_{1,*}\cdot\Pi_{2,*}\cdot \Pi_{3,*}\Big\rangle_{0,*}\\ \notag
&+\Big\langle\Big(\prod \Big(1+\dfrac{2}{W^2}+\frac{1}{W^2}\dfrac{s_j^*}{1-s_j^*}\Big)-1\Big)
\cdot\displaystyle\Pi_{1,*}\cdot\Pi_{2,*}\cdot \Pi_{3,*}\Big\rangle_{0,*}+e^{-Cn/W}
\end{align}
Since $p\le Cn/W$, we have $2p/W^2\le W^{-1}$, $p^2/W^2=o(1)$. In addition, $\Pi_3$ has degree
$p<Cn/W$, $|\Pi_3|\le (1+\delta)^p$. Hence, we can write
\begin{align*}
&\Big\langle\Big(\prod \Big(1+\frac{2p}{W^2}s_jf_1(s_j)+\frac{p^2}{W^2}s_j^2
f_1(s_j)^2\Big)-1\Big)
\cdot\displaystyle\Pi_{1,*}\cdot\Pi_{2,*}\cdot \Pi_{3,*}\Big\rangle_{0,*}\\
&\le (1+\delta)^p\Big\langle\Big(\exp\Big\{\sum\limits_{i=-n}^n\Big(\frac{Cp}{W^2}
\cdot\frac{\tilde{a}_i+\tilde{b}_i}{W}+
\frac{p^2c}{W^2}\cdot\frac{\tilde{a}_i^2+\tilde{b}_i^2}{W^2}\Big)\Big\} -1\Big)
\cdot\displaystyle\Pi_{1,*}\cdot\Pi_{2,*}\Big\rangle_{0,*}\\
&\le e^{\delta p}\Big\langle\Big(\exp\Big\{\sum\limits_{i=-n}^n\Big(\frac{Cp}{W^2}\cdot
\frac{\tilde{a}_i+\tilde{b}_i}{W}+
\frac{p^2c}{W^2}\cdot\frac{\tilde{a}_i^2+\tilde{b}_i^2}{W^2}\Big)\Big\} -1\Big)^2
\Big\rangle_{0,*}^{1/2}
\cdot\Big\langle\displaystyle\Pi_{1,*}^2\cdot\Pi_{2,*}^2\Big\rangle_{0,*}^{1/2},
\end{align*}
where $\Pi_1$, $\Pi_2$ are the products of the Taylor's series for $\exp\{\varphi_+(\tilde{a}_j/W)\}$ and
for \\ $\exp\{\varphi_-(\tilde{b}_j/W)\}$, and $\Pi_{1,*}$, $\Pi_{2,*}$ are obtained form $\Pi_1$, $\Pi_2$ by
changing the coefficients to their absolute values.

 The second factor is $1+o(1)$ (see Lemma \ref{l:maj}(i)). Moreover, taking
the Gaussian integral of the first
factor, we obtain
\begin{multline*}
\Big\langle\Big(\prod \Big(1+\frac{2p}{W^2}s_jf_1(s_j)+\frac{p^2}{W^2}s_j^2
f_1(s_j)^2\Big)-1\Big)
\cdot\displaystyle\Pi_{1,*}\cdot\Pi_{2,*}\cdot \Pi_{3,*}\Big\rangle_{0,*}\\
\le e^{\delta p}
\Big(\exp\Big\{\frac{cp^2n}{W^3}\Big\}-1\Big)\le e^{\delta p}\Big(\exp\Big\{\frac{cpn^2}{W^4}\Big\}-1\Big).
\end{multline*}
Similarly,
\begin{multline*}
\Big\langle\Big(\prod \Big(1+\dfrac{2}{W^2}+\frac{1}{W^2}\dfrac{s_j^*}{1-s_j^*}\Big)-1\Big)
\cdot\displaystyle\Pi_{1,*}\cdot\Pi_{2,*}\cdot \Pi_{3,*}\Big\rangle_{0,*}
\le  e^{\delta p}\Big(\exp\Big\{\frac{cn}{W^4}\Big\}-1\Big).
\end{multline*}
Thus, since $p<Cn/W$, in view of (\ref{bphi}), we get
\begin{multline}\label{phi_bound}
\sum\limits_{p=1}^{Cn/W}\dfrac{(C_1)^p}{p!N^p}\sum\limits_{k_1,\ldots,k_p} \Big|
\Big\langle(\widetilde{\Phi}_{k_1,\ldots,k_p}
(\overline{\vphantom{b}a},\overline{b})-\widetilde{\Phi}_{k_1,\ldots,k_p}(0,0))
\cdot\displaystyle\Pi_1\cdot\Pi_2\cdot\Pi_3\Big\rangle_0\Big|\\
\le \exp\{C_1e^{\delta+C_2n^2/W^4}\}-e^{C_1e^\delta}+\Big(\exp\Big\{\frac{cn}{W^4}\Big\}-1\Big)
\cdot\Big(e^{C_1e^\delta}-1\Big)=o(1).
\end{multline}
If $p\gg n/W$, then $1/\sqrt{p!}\ll e^{-Cn/W}$, and hence we can replace $\langle\ldots\rangle_0$ with
$\langle\ldots\rangle_{0,*}$
(see Lemma~\ref{l:okr}) and then take the absolute value under the integral and get the bound
\[e^{C_1n/W}([\sqrt{Cn/W}]!)^{-1}\sum\limits_{p=CN/W}^\infty (C_2)^p/\sqrt{p!}=o(1).\]
Let us prove now that
\begin{equation}\label{phi0}
\widetilde{\Phi}_{k_1,\ldots,k_p}(0,0)=\Big\langle\prod\limits_{j=1}^p
\Big( S(\widetilde{R}_{k_j}(0)P_{-n})-S(P_{-n})\Big)\cdot \prod\limits_{i=-n+1}^n
(1-2|\tilde{V}_i(0)_{12}|^2)\Big\rangle_{\tilde{\eta}}=o(1).
\end{equation}
For the simplicity let us write
\[
\widetilde{R}_{k}^0:=\widetilde{R}_{k}(0), \quad \tilde{Q}_{k_1}^0:=\tilde{Q}_{k_1}(0),\quad \tilde{V}_i^0:=\tilde{V}_i(0).
\]
Note that $S(P)\in [0,1]$ for $P\in \mathring{Sp}(2)$, and thus
\begin{eqnarray}\notag
&\big|1-2S(P)\big|\le 1,\\  \label{b1}
&\big|S(P_1)-S(P_2)\big|\le 1,\quad P_1,P_2\in \mathring{Sp}(2),\\ \notag
&\big|1-2|\tilde{V}_i(0)_{12}|^2\big|\le 1.
\end{eqnarray}
Hence, we have
\begin{multline}\label{phi_0}
\Big|\widetilde{\Phi}_{k_1,\ldots,k_p}(0,0)\Big|\le \Big\langle
\Big| S(\widetilde{R}_{k_1}^0P_{-n})-S(P_{-n})\Big|
\Big\rangle_{\tilde{\eta}}
\le \Big\langle
\Big( S(\widetilde{R}_{k_1}^0P_{-n})-S(P_{-n})\Big)^2
\Big\rangle_{\tilde{\eta}}^{1/2}.
\end{multline}
In addition, 
\begin{align}\label{b2}
\Big\langle
&\Big( S(\widetilde{R}_{k_1}^0P_{-n})-S(P_{-n})\Big)^2
\Big\rangle_{\tilde{\eta}}
= \Big\langle
\Big(\big(S(\widetilde{R}_{k_1-1}^0P_{-n})-S(P_{-n})\big)\\ \notag
&+S(\tilde{Q}_{k_1}^0)\big(1-2S(\widetilde{R}_{k_1-1}^0P_{-n})\big)+H\big(\tilde{Q}_{k_1}^0,\widetilde{R}_{k_1-1}^0 \big)\Big)^2\Big\rangle_{\tilde{\eta}},
\end{align}
where
\begin{eqnarray}
H(P,Q)=\sum\limits_{l\ne s} P_{1l}Q_{l2}\overline{P}_{1s}\overline{Q}_{s2}+\sum\limits_{l\ne s} P_{1l}Q_{l4}\overline{P}_{1s}\overline{Q}_{s4}.
\end{eqnarray}
Since it is easy to check that
\begin{eqnarray*}
&\Big\langle
\Big(\big(S(\widetilde{R}_{k_1-1}^0P_{-n})-S(P_{-n})\big)
+S(\tilde{Q}_{k_1}^0)\big(1-2S(\widetilde{R}_{k_1-1}^0P_{-n})\big)\Big)H\big(\tilde{Q}_{k_1}^0,\widetilde{R}_{k_1-1}^0 \big)\Big\rangle_{\tilde{\eta}}=0,\\
&\Big\langle H\big(\tilde{Q}_{k_1}^0,\widetilde{R}_{k_1-1}^0 \big)^2\Big\rangle_{\tilde{\eta}}\le C \big\langle \tilde{v}_{k_1}(0)^2\big\rangle_{\tilde{\eta}_{k_1}}+
C \big\langle \tilde{u}_{k_1}(0)^2\big\rangle_{\tilde{\eta}_{k_1}}\le C_1/W^2,\\
&\Big\langle
S(\tilde{Q}_{k_1}^0)^2\big(1-2S(\widetilde{R}_{k_1-1}^0P_{-n})\big)^2\Big\rangle_{\tilde{\eta}}\le \Big\langle
S(\tilde{Q}_{k_1}^0)\Big\rangle_{\tilde{\eta}_{k_1}}\le C/W^2,\\
&\Big|\Big\langle
S(\tilde{Q}_{k_1}^0)\Big(S(\widetilde{R}_{k_1-1}^0P_{-n})-S(P_{-n})\Big)
\Big(1-2S(\widetilde{R}_{k_1-1}^0P_{-n})\Big)\Big\rangle_{\tilde{\eta}}\Big|\le \Big\langle
S(\tilde{Q}_{k_1}^0)\Big\rangle_{\tilde{\eta}_{k_1}}\le C/W^2.
\end{eqnarray*}
This,  (\ref{b1}) and (\ref{b2}) yield
\begin{multline}\label{b3}
\Big\langle
\Big( S(\widetilde{R}_{k_1}^0P_{-n})-S(P_{-n})\Big)^2
\Big\rangle_{\tilde{\eta}}\\
\le  \Big( S(\widetilde{R}_{k_1-1}^0P_{-n})-S(P_{-n})\Big)^2
\Big\rangle_{\tilde{\eta}}+C/W^2\le \ldots\le CN/W^2=o(1).
\end{multline}
Now (\ref{phi_0}) and (\ref{b3}) give (\ref{phi0}).

Therefore,
\begin{multline*}
\sum\limits_{p=1}^{Cn/W}\dfrac{(C_1)^p}{p!N^p}\sum\limits_{k_1,\ldots,k_p} \Big|\Big\langle
\widetilde{\Phi}_{k_1,\ldots,k_p}(0,0)\cdot\displaystyle\Pi_1\cdot\Pi_2\cdot\Pi_3\Big\rangle_0\Big|\\
\le \sqrt{\dfrac{CN}{W^2}}\sum\limits_{p=1}^{Cn/W}\dfrac{(C_1)^p(1+\delta)^p}{p!}\le \sqrt{C_1N/W^2}=o(1),
\end{multline*}
which together with (\ref{phi_bound}) completes the proof of Lemma \ref{l:un}.
\end{proof}
Thus, we can change $F(\overline{\vphantom{b} a},\overline{b},Q)$ to $F(0,0,I)$ in (\ref{F_0_2}), and
then integrate over
$\eta$, according to
(\ref{int_V}). We obtain
\begin{align}\notag
&\Sigma_{\pm}=2^{N}6^{N-1}W^{-6N+4}e^{-2Nc_0}
\int\limits_{\mathring{Sp}(2)}\,\,\,\int\limits_{|\widetilde{\vphantom{b}a}_j|, |\tilde{b}_j|\le W^{1-\kappa}} \mu_{c_+}(a)\,\mu_{c_-}(b)\\ \label{F_0_3}
&\times
\exp\Big\{-\sum\limits_{j=-n}^n\varphi_+(\tilde{a\vphantom{b}}_j/W)-
\sum\limits_{j=-n}^n\varphi_-(\tilde{b}_j/W)\Big\} \prod\limits_{j=-n+1}^n\Big(1-\dfrac{2}{W^2\Delta_j\Delta_{j-1}}\Big)
\\ \notag
&\times e^{-\frac{i}{2\rho(\lambda_0)}\,\Tr P_{-n}^*L_4P_{-n}\hat{\xi}_4}\,\,
\Delta_{-n}^2\Delta_n^2d\,\nu(P_{-n})\prod\limits_{q=-n}^nd \tilde{a\vphantom{b}}_q\,
d \tilde{b}_q (1+o(1))
\end{align}
Integrating over $P_{-n}$ by the Itsykson-Zuber formula (see Proposition \ref{p:Its-Z}) and
using Lemma \ref{l:maj}, we get finally
\begin{align}\notag
&\Sigma_{\pm}=\dfrac{2^{N} 6^{N-1}e^{-2Nc_0}\cdot DS(\pi(\xi_1-\xi_2))}{W^{6N-4}}\int
\limits_{|\widetilde{\vphantom{b}a}_j|, |\tilde{b}_j|\le W^{1-\kappa}} \prod\limits_{q=-n}^nd \tilde{a\vphantom{b}}_q\,
d \tilde{b}_q \cdot
 \mu_{c_+}(a)\,\mu_{c_-}(b)\\ \label{F_0_4}
&\times
(a_+-a_-+(\tilde{a\vphantom{b}}_{-n}-\tilde{b}_{-n})/W)^2(a_+-a_-+(\tilde{a\vphantom{b}}_n-\tilde{b}_n)/W)^2(1+o(1))\\
\notag
&=\dfrac{8\pi^4 \rho(\lambda_0)^4e^{-2Nc_0}(24\pi)^{N} \cdot DS(\pi(\xi_1-\xi_2))}{3\, W^{6N-4}}\,
 \Big|\mdet^{-1}\Big(-\Delta+\frac{2c_+}{W^2}\Big)\Big| (1+o(1)).
\end{align}
\end{proof}

\subsubsection{$\Sigma_+$ and $\Sigma_-$.}
In this section we prove that the integrals $\Sigma_+$ and $\Sigma_-$ over $\Omega_\delta^+$ and
$\Omega_\delta^-$ have smaller orders than $\Sigma_\pm$.

\begin{lemma}\label{l:sig_+}
For the integral $\Sigma_+$ over the domain $\Omega_\delta^+$ of (\ref{Om_delta}) we have, as $W\to\infty$
\begin{align*}\notag
|\Sigma_{+}|\le C \,W^{-2} |\Sigma_{\pm}|.
\end{align*}
The same is valid for the integral $\Sigma_-$ over the domain $\Omega_\delta^-$.
\end{lemma}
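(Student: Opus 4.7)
My proof plan parallels that of Lemma~\ref{l:sig}, replacing the two-saddle configuration $(\overline a_+,\overline a_-)$ by the single-saddle configuration $(\overline a_+,\overline a_+)$ on $\Omega_\delta^+$, and tracking where the estimate loses (or regains) powers of $W$ relative to (\ref{F_0_4}). First I perform the change of variables $a_j=a_++\tilde a_j/W$, $b_j=a_++\tilde b_j/W$ with $|\tilde a_j|,|\tilde b_j|\le W^{1-\kappa}$. By (\ref{f_exp}) the phase factor becomes $e^{-2Nf(a_+)}\mu_{c_+}(\tilde a)\mu_{c_+}(\tilde b)$ up to the Taylor error controlled by Lemma~\ref{l:maj}. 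Both Gaussian measures now share the parameter $c_+$, but by Lemma~\ref{l:okr} this gives the same magnitude $|\mdet^{-1}(-\Delta+2c_+/W^2)|$ as in Lemma~\ref{l:sig}. The key structural difference is that the Jacobian product $\prod_l(a_l-b_l)^4=W^{-4N}\prod_l(\tilde a_l-\tilde b_l)^4$ is now small, whereas on $\Omega_\delta^\pm$ the analogous product was $\asymp(a_+-a_-)^{4N}=O(1)$.

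Next I integrate out the $Q_j$ via the symplectic identity (\ref{Its-Zub_s1}), exactly as in the derivation of (\ref{sig_vn1}). The factors $6/\tilde t_j^2$ telescope against the interior Jacobians, leaving only the boundary contribution $(a_{-n}-b_{-n})^2(a_n-b_n)^2=W^{-4}(\tilde a_{-n}-\tilde b_{-n})^2(\tilde a_n-\tilde b_n)^2$. On $\Omega_\delta^+$ the parameter $\tilde t_j=(\tilde a_j-\tilde b_j)(\tilde a_{j-1}-\tilde b_{j-1})$ is only $O(1)$ (not large as in $\Omega_\delta^\pm$), so both summands of the bracket in (\ref{Its-Zub_s1}) must be retained; however the full combination is uniformly bounded (the $1/\tilde t$ singularities cancel under Taylor expansion) and no further power of $W$ is produced. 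For the $P_{-n}$-integration the crucial simplification is that $L_4=a_+\cdot I_4$ on $\Omega_\delta^+$, so $\Tr(R_kP_{-n})^*L_4(R_kP_{-n})\hat\xi_4-\Tr L_4\hat\xi_4=0$ identically and the $P_{-n}$-dependent piece of (\ref{razn}) is $O(W^{-1})$ per site. A verbatim repetition of Lemma~\ref{l:un} then replaces this factor by $1+o(1)$; no $DS$ kernel appears, but the $P_{-n}$-integral just produces $1+o(1)$.

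Finally I evaluate the remaining Gaussian integral. The polynomial insertions $(\tilde a_{-n}-\tilde b_{-n})^2(\tilde a_n-\tilde b_n)^2$ at the two boundary sites integrate against $\mu_{c_+}(\tilde a)\mu_{c_+}(\tilde b)$ to yield a factor of order $\langle\tilde a_{-n}^2\rangle_*\langle\tilde a_n^2\rangle_*\le CW^2$ via the diagonal Green-function bound (\ref{G_as}) of Lemma~\ref{l:okr}(1). Combining all prefactors,
\[
|\Sigma_+|\le C\,\frac{e^{-2Nc_0}(24\pi)^N}{W^{6N-2}}\Big|\mdet^{-1}\Big(-\Delta+\frac{2c_+}{W^2}\Big)\Big|=C\,W^{-2}I_0,
\]
where $I_0$ is defined in (\ref{I_0}). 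Since $|\Sigma_\pm|\asymp I_0$ by Lemma~\ref{l:sig}, this gives the desired bound $|\Sigma_+|\le CW^{-2}|\Sigma_\pm|$. The case of $\Sigma_-$ follows identically by the symmetry $a_+\leftrightarrow a_-$, $c_+\leftrightarrow c_-$. The main technical hurdles are (i) establishing uniform boundedness of the symplectic integrand in the near-degenerate regime $\tilde t_j=O(1)$, where the formula (\ref{Its-Zub_s1}) must be controlled via Taylor expansion of the $1/\tilde t^2$ prefactor, and (ii) carefully tracking the quadratic boundary insertions against the $O(W)$ diagonal Green function, which is the only source of $W$ amplification back from $W^{-4}$ to $W^{-2}$.
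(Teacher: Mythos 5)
Your overall bookkeeping of $W$-powers lands on the right answer, but the central step — integrating out the $Q_j$ in the eigenvalue/symplectic-angle parametrization — has a genuine gap. You assert that on $\Omega_\delta^+$ the parameter $\tilde t_j=(\tilde a_j-\tilde b_j)(\tilde a_{j-1}-\tilde b_{j-1})$ is $O(1)$ and that the per-site factor $q_j\big[1-2/\tilde t_j+e^{-\tilde t_j}(1+2/\tilde t_j)\big]$ is uniformly bounded once the $1/\tilde t_j$ singularity is removed by Taylor expansion. Neither holds. Since $|\tilde a_j|,|\tilde b_j|\le W^{1-\kappa}$, one has $|\tilde t_j|\lesssim W^{2-2\kappa}$, which is not $O(1)$; more importantly, on $\Omega_\delta^+$ the sign of $\tilde a_j-\tilde b_j$ can alternate with $j$, so $\tilde t_j$ can be large and \emph{negative}, and then $q_j[\cdots]\sim 6\,e^{|\tilde t_j|}/\tilde t_j^2$ blows up exponentially. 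The product $\prod_j q_j[\cdots]$ is therefore not bounded by any fixed constant. This can in principle be repaired: for $\tilde t_j<0$ one has $(a_j-a_{j-1})^2+(b_j-b_{j-1})^2\ge 2|a_j-b_j|\,|a_{j-1}-b_{j-1}|$, so the transfer Gaussian contributes a compensating $e^{-|\tilde t_j|}$; but this compensation argument is not in your proposal, and it is precisely the kind of cancellation the eigenvalue parametrization makes awkward near the degeneracy $a_j=b_j$.

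The paper avoids this issue by not using the symplectic angular decomposition on $\Omega_\delta^+$ at all. Instead it returns to the matrix-entry coordinates $(x_j, y_j, w_{j1}, w_{j2})$ of (\ref{F_j}), notes that $\Omega_\delta^+$ is the product set $x_j,y_j\in U_\delta(a_+)$, $|w_{j1}|,|w_{j2}|\le\delta$, and after rescaling obtains a product of six identical one-dimensional Gaussian chains with weight $\mu_{c_+}$, producing a normalization $\mdet^{-3}D$ (one factor $\mdet^{-1/2}D$ per chain). The remaining expectation is shown to be $1+o(1)$ by Lemma~\ref{l:maj} and an adaptation of Lemmas~5--6 of \cite{TSh:14}. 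The $W^{-2}$ gain then falls out cleanly: against $I_0\propto W^{-6N+4}|\mdet^{-1}D|$ one has $\Sigma_+\propto W^{-6N}\mdet^{-3}D$, so $|\Sigma_+/I_0|\le W^{-4}|\mdet^{-2}D|\le CW^{-2}$ via $|\mdet^{-1}D|\le CW$ from Lemma~\ref{l:okr}. This bypasses both the $a_j=b_j$ degeneracy and the sign-alternation blowup at once.

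A secondary point: you invoke ``a verbatim repetition of Lemma~\ref{l:un}'' to dispose of the $P_{-n}$-factor, but the measure $\tilde\eta$ in that lemma's proof is built from the density $\rho(x_j,y_j)$ containing $(a_+-a_-)^2>0$, which degenerates on $\Omega_\delta^+$. Fortunately this is moot: as you yourself observe, $L_4=a_+I_4$ on $\Omega_\delta^+$ makes the leading $P_{-n}$-phase reduce to $e^{-i\pi(\xi_1+\xi_2)}$, and the full ``bad'' factor has modulus one, so one can simply bound it by $1$ without any analogue of Lemma~\ref{l:un}.
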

\begin{proof}
Consider $\Omega_\delta^+$ ($\Omega_\delta^-$ is similar).
Returning to $x_j$, $y_j$, $w_{j1}$, $w_{j2}$ coordinates (see (\ref{F_j})), we can write
that $\Omega_\delta^+$ corresponds to the set
\[
\widetilde{\Omega}_\delta^+=\big\{x_j, y_j, w_{j1}, w_{j2}: x_j, y_j\in U_\delta(a_+), |w_{j1}|\le \delta,
|w_{j2}|\le \delta\big\}.
\]
Change variables as
\begin{align*}
x_j&=a_++\dfrac{\tilde{x}_j}{W}, \quad w_{j1}=\dfrac{\tilde{w}_{j1}}{W},\\
y_j&=a_++\dfrac{\tilde{y}_j}{W}, \quad w_{j2}=\dfrac{\tilde{w}_{j2}}{W}.
\end{align*}
This yields
\begin{align*}\notag
&\Sigma_+=\dfrac{12^NC(\xi)^{-1}}{\pi^{2N}W^{6N}}\int\limits_{|\tilde{x}_j|,|\tilde{y}_j|\le W^{1-\kappa}} d\tilde{x}\,
d\tilde{y}
\int\limits_{|\tilde{w}_{j1}|,|\tilde{w}_{j2}|\le W^{1-\kappa}}d\Re\tilde{w}_{1}\,d\Im\tilde{w}_{1}\,
d\Re\tilde{w}_{2}\,d\Im\tilde{w}_{2}\\ \notag
&\times\exp\Big\{-\sum\limits_{j=-n+1}^n\big((\tilde{x}_j-\tilde{x}_{j-1})^2/2+
(\tilde{y}_j-\tilde{y}_{j-1})^2/2+|\tilde{w}_{j1}-\tilde{w}_{1,j-1}|^2+
|\tilde{w}_{j2}-\tilde{w}_{2,j-1}|^2\big)\Big\}\\ \notag
&\times\exp\Big\{-\dfrac{1}{2}\sum\limits_{j=-n}^n\Big(\big(a_++\dfrac{\tilde{x}_j}{W}+\dfrac{i\lambda_0}{2}+
\dfrac{i\xi_1}{N\rho(\lambda_0)}\big)^2+\big(a_++\dfrac{\tilde{y}_j}{W}+\dfrac{i\lambda_0}{2}+
\dfrac{i\xi_2}{N\rho(\lambda_0)}\big)^2\Big)\Big\}\\
&\times \exp\Big\{-\sum\limits_{j=-n}^n\big(|\tilde{w}_{j1}|^2/W^2+|\tilde{w}_{j2}|^2/W^2\big)\Big\}\\ \notag
&\times\prod\limits_{j=-n}^n\Big(\big(a_++\dfrac{\tilde{x}_j}{W}-\dfrac{i\lambda_0}{2}\big)
\big(a_++\dfrac{\tilde{y}_j}{W}-\dfrac{i\lambda_0}{2}\big)-\dfrac{|\tilde{w}_{j1}|^2+
|\tilde{w}_{j2}|^2}{W^2}\Big),
\end{align*}
which gives after some transformations
\begin{align*}
&\Sigma_+=\dfrac{12^N e^{-i\pi(\xi_1+\xi_2)}}{\pi^{2N}W^{6N}e^{2Nc_0}}\int\limits_{|\tilde{x}_j|,|\tilde{y}_j|\le W^{1-\kappa}} d\tilde{x}\,
d\tilde{y}
\int\limits_{|\tilde{w}_{j1}|,|\tilde{w}_{j2}|\le W^{1-\kappa}}d\Re\tilde{w}_{1}\,d\Im\tilde{w}_{1}\,
d\Re\tilde{w}_{2}\,d\Im\tilde{w}_{2}\\
&\times\mu_{c_+}(\tilde{x}) \cdot \mu_{c_+}(\tilde{y})\cdot\mu_{c_+}(\sqrt{2}\Re \tilde{w}_1)\cdot \mu_{c_+}(\sqrt{2}\Im \tilde{w}_1)\cdot
\mu_{c_+}(\sqrt{2}\Re \tilde{w}_2)\cdot \mu_{c_+}(\sqrt{2}\Im \tilde{w}_2) \\
&\times\exp\Big\{-\sum\limits_{j=-n}^n\Big(\dfrac{i\pi\xi_1}{N}\cdot \dfrac{\tilde{x}_j}{W}+
\phi_+(\tilde{x}_j/W)+\dfrac{i\pi\xi_2}{N}\cdot \dfrac{\tilde{y}_j}{W}+\phi_+(\tilde{y}_j/W)\Big)\Big\}\\
&\times\exp\Big\{\sum\limits_{j=-n}^n\Phi_+(\tilde{x}_j/W,\tilde{y}_j/W,\tilde{w}_{j1}/W,
\tilde{w}_{j2}/W)\Big\},
\end{align*}
where $\tilde{a}_+=a_+-i\lambda_0/2$ and
\begin{align}\label{Phi_+}
\Phi_+(x,y,w_1,w_2)=\log\Big(1-\dfrac{|w_1|^2+|w_2|^2}{(x+\tilde{a}_+)(y+\tilde{a}_+)}\Big)+
\dfrac{|w_1|^2+|w_2|^2}{\tilde{a}_+^2}.
\end{align}
Set
\begin{align*}
d\tilde{\mu}_{\gamma}=&\mu_{c_+}(\tilde{x})\,\mu_{\gamma}(\tilde{y})\,
\mu_{\gamma}(\sqrt{2}\Re \tilde{w}_1)\,
\mu_{\gamma}(\sqrt{2}\Im \tilde{w}_1)\\
&\times\mu_{\gamma}(\sqrt{2}\Re \tilde{w}_2)\, \mu_{\gamma}(\sqrt{2}\Im \tilde{w}_2)\,
d\Re\tilde{w}_{1}\,d\Im\tilde{w}_{1}\,
d\Re\tilde{w}_{2}\,d\Im\tilde{w}_{2},
\end{align*}
and let $\langle\ldots\rangle_{\tilde{\mu}_{\gamma}}$ and $\langle\ldots\rangle_{0,\tilde{\mu}_{\gamma}}$
be an expectation with respect to $d\tilde{\mu}_{\gamma}$ over $\mathbb{R}^{6N}$ or over
$[-W^{1-\kappa},W^{1-\kappa}]^{6N}$ respectively. Computing the integral $\int d\tilde{\mu}_{c_+}$ we get
\begin{align*}
&\Sigma_+=\dfrac{(24\pi)^N e^{-i\pi(\xi_1+\xi_2)}\mdet^{-3} D}{W^{6N}e^{2Nc_0}}\Big\langle
\hbox{Prod}_1(x)\cdot \hbox{Prod}_2(y)\cdot \hbox{Prod}_3 \Big\rangle_{0,\tilde{\mu}_{c_+}},
\end{align*}
where $\hbox{Prod}_{l}(x)$ and $\hbox{Prod}_3$ are the products of Taylor's series of
$\exp\{-i\pi\xi_l\tilde{x}_j/(N W)-
\phi_+(\tilde{x}_j/W)\}$, $l=1,2$ and $\exp\{\Phi_+\}$ respectively, and
\begin{align*}
D=-\Delta+\dfrac{2c_+}{W^2}.
\end{align*}
Since according to Lemma \ref{l:maj} we have
\[
\Big\langle
\hbox{Prod}_1(x)\cdot \hbox{Prod}_2(y) \Big\rangle_{0,\tilde{\mu}_{c_+}}=1+o(1),
\]
and (see Lemma \ref{l:okr})
\[
\mdet^{-1}D\le CW,
\]
we are left to prove that
\begin{equation}\label{exp_Phi}
\Big\langle
\hbox{Prod}_1(x)\cdot \hbox{Prod}_2(y)\cdot \big(\hbox{Prod}_3-1\big) \Big\rangle_{0,\tilde{\mu}_{c_+}}=o(1).
\end{equation}
Note that the series for $\exp\{\Phi_+\}$ starts from the third order. Therefore, repeating almost
literally the proof of Lemma  5 of \cite{TSh:14}, we can prove that
\[
\Big\langle\exp\Big\{\sum\limits_{j=-n}^n\Phi_+(\tilde{x}_j/W,\tilde{y}_j/W,\tilde{w}_{j1}/W,
\tilde{w}_{j2}/W)\Big\}-1\Big\rangle_{0,\tilde{\mu}_{c_+}}=o(1).
\]
The key point of Lemma  5 of \cite{TSh:14} was Lemma 6. 
The only difference in the proof of  Lemma 6 of \cite{TSh:14} is that now $g$ is a polynomial of all variables together
$\tilde{x}_j, \tilde{y}_j$, $\Re \tilde{w}_{j1}, \Im \tilde{w}_{j1}$,
$\Re \tilde{w}_{j2}, \Im \tilde{w}_{j2}$. But again we can change $\langle\ldots \rangle_{0,*}$ to $\langle\ldots \rangle_{*}$,
then write
\begin{multline*}
\Big\langle\exp\Big\{\sum\limits_j g(\tilde{x}_j, \tilde{y}_j,\tilde{w}_{j1}, \tilde{w}_{j2} )\Big\}\Big\rangle_* -1\\ \le \sum\limits_{i_1}
\Big\langle g(\tilde{x}_{i_1}, \tilde{y}_{i_1},\tilde{w}_{i_11}, \tilde{w}_{i_12} )\cdot\exp\Big\{\sum\limits_j g(\tilde{x}_j, \tilde{y}_j,\tilde{w}_{j1}, \tilde{w}_{j2} )\Big\}\Big\rangle_* ,
\end{multline*}
and apply the Wick theorem until we get $\langle\exp\Big\{\sum\limits_j g(\tilde{x}_j, \tilde{y}_j,\tilde{w}_{j1}, \tilde{w}_{j2} )\Big\}\rangle_*$ or until the number of steps
become bigger then $s_\kappa$, which is the number such that $W^{-\kappa s_\kappa}\le W^{-2}$ (see Step 2).
Now we are integrating
over $d\tilde{\mu}_{\Re c_+}$, i.e., over all variables together, and so each vertex of the
multigraph $H$ corresponding to some site $j$ consists of six parts coming from the degree of
each variables $\tilde{x}_j, \tilde{y}_j$, $\Re \tilde{w}_{j1}, \Im \tilde{w}_{j1}$,
$\Re \tilde{w}_{j2}, \Im \tilde{w}_{j2}$.  This means that some pairing are forbidden (for example,
between vertices corresponding to $(\Re \tilde{w}_{1i_1})^2\tilde{x}_{i_1}$ and
$(\Im \tilde{w}_{2i_2})^2\tilde{y}_{i_2}$), and some different pairing can correspond to the same multigraph,
but since the number of such pairing is finite (since we make the finite number of steps), it does
not change the proof (recall that matrix $M_*=-\triangle+\Re \gamma/W^2$ are the same for each set of variables
$\{\tilde{x}_j\}, \{\tilde{y}_j\}$, $\{\Re \tilde{w}_{j1}\}, \{\Im \tilde{w}_{j1}\}$,
$\{\Re \tilde{w}_{j2}\},\{ \Im \tilde{w}_{j2}\}$).

To derive  Lemma 5 of \cite{TSh:14} from Lemma 6, we should change $|x_j/W|^3$ in the bound of each addition of $\Sigma_k^0$
to $|s(w)_j^2x_j/W^3|$ or $|s(w)_j^2y_j/W^3|$, where $s(w)_j= \Re w_{j1}$, $\Im w_{j1}$, $\Re w_{j2}$ or
$\Im w_{j2}$ (note that each summand in the Taylor's series of $\exp\{\Phi_+\}$ has $s(w)_j^2/W^2$ and
$x_j/W$ or $y_j/W$), and use
\[
|s(w)^2x/W^3|\le \dfrac{p^{-1}x^2+ps(w)^4/W^4}{2}
\]
instead of  
\begin{equation*}
|x|^3\le \dfrac{p^{-1}x^2+px^4}{2}
\end{equation*}
(see eq. (4.23) in \cite{TSh:14}).

Then using Lemma \ref{l:maj} we can prove (\ref{exp_Phi}), thus Lemma \ref{l:sig_+}.
\end{proof}
This together with Lemma \ref{l:sig} yield Lemma \ref{l:sigma}.

\section{Auxiliary result}

\textbf{Proof of the Proposition \ref{p:Its-Z}.}
Statement (i) is the well-known Harish Chandra/Itsykson-Zuber formula. Its proof
can be found , e.g., in \cite{Me:91}, Appendix 5.

To prove (\ref{Its-Zub_s}) note that one can diagonalize $X$ by unitary transformation  and keep $Z$ and $T$ fixed.
Indeed, consider any unitary matrix $U$ which diagonalize $X$. Since $U\in U(2)$, it has
the form
\begin{equation}\label{u_2}
U=\left(\begin{array}{cc}
\cos\varphi\cdot e^{i\theta_1}&\sin\varphi\cdot e^{i\theta_2}\\
-\sin\varphi\cdot e^{i\theta_3}&\cos\varphi\cdot e^{i(\theta_2+\theta_3-\theta_1)}
\end{array}\right).
\end{equation}
Moreover, we can shift $U$ by any diagonal unitary matrix $U_1$. Choose $U_1$ such that
\[
U_0=UU_1=\left(\begin{array}{cc}
\cos\varphi&\sin\varphi\cdot e^{i\alpha}\\
-\sin\varphi\cdot e^{-i\alpha}&\cos\varphi
\end{array}\right).
\]
Then
\[
U_0 \sigma U_0^t=\sigma,
\]
and thus
\[
\left(\begin{array}{cc}
U_0&0\\
0&\overline{U}_0
\end{array}\right)\, F \,\left(\begin{array}{cc}
U_0&0\\
0&\overline{U}_0
\end{array}\right)^*=
\left(\begin{array}{cc}
U_0XU_0^*&w_2\, U_0\sigma U_0^t\\
-\overline{w}_2\overline{U}_0\sigma U_0^*&\overline{U}_0X^tU_0^t
\end{array}\right)=
\left(\begin{array}{cc}
\hat{X}&w_2\sigma\\
-\overline{w}_2\sigma&\hat{X}
\end{array}\right).
\]
Hence, changing $X\to U_0^*\hat{X}U_0$ (the Jacobian is $\pi/2(x_1-x_2)^2$) and using (i), we obtain
\begin{align*}
&I_t(G)=\dfrac{\pi}{2}\int_{\Omega_Y}\int_{U(2)} e^{-\frac{t}{2}\Tr(U_0^*\hat{X}U_0-D)^2-t|w_2|^2}(x_1-x_2)^2
\Phi(\hat{X},w_2)d\hat{X}\,dw_2\,d\overline{w}_2\,d\mu(U_0)\\
&=\dfrac{\pi}{2t}\int_{\Omega_Y} e^{-\frac{t}{2}\Tr(\hat{X}-D)^2-t|w_2|^2}\cdot\dfrac{x_1-x_2}{d_1-d_2}\cdot \left(1-
e^{-t(x_1-x_2)(d_1-d_2)}\right)
\Phi(\hat{X},w_2)d\hat{X}\,dw_2\,d\overline{w}_2\\
&=\dfrac{\pi}{2t}\int_{\Omega_Y} e^{-\frac{t}{2}\Tr(Y-D)^2}\cdot\dfrac{\Tr YL}{d_1-d_2}\cdot \left(1-
e^{-t\cdot\Tr YL\cdot(d_1-d_2)}\right)
\Phi(Y)dY,
\end{align*}
where
\[
Y=\left(\begin{array}{cc}
x_1&w_2\\
\overline{w}_2&x_2
\end{array}\right), \quad \mathcal{L}=\left(\begin{array}{cc}
1&0\\
0&-1
\end{array}\right),\quad dY=d\hat{X}\,dw_2\,d\overline{w}_2,\quad \Omega_Y=\{Y:F\in \Omega\}.
\]
Now diagonalizing $Y$ by the unitary transformation $V$, writing
\[
\Tr V^*\hat{Y}V\mathcal{L}=(y_1-y_2)(1-2|V_{12}|^2)
\]
and again using (\ref{Its-Zub}), we get finally
\begin{align*}
&I_t(G)=\dfrac{\pi^2}{4t}\int_{\hat{\Omega}}\int_{U(2)} \exp\Big\{-\frac{t}{2}\,
\Tr(V^*\hat{Y}V-D)^2\Big\}\cdot\dfrac{1-2|V_{12}|^2}{d_1-d_2}
\\ &\times
\left(1-
\exp\Big\{-t\,\Tr V^*\hat{Y}V\mathcal{L}\cdot(d_1-d_2)\Big\}\right) (y_1-y_2)^3dy_1\,dy_2\,d\mu(V)\\
&=\dfrac{\pi^2}{4t^2}\int dy_1\,dy_2\, \exp\Big\{-\frac{t}{2}\,\Tr(\hat{Y}^2+D^2)\Big\}
\cdot\Phi(\hat{Y})\cdot \dfrac{(y_1-y_2)^2}
{(d_1-d_2)^2}\\
&\times \Big[e^{t(y_1d_1+y_2d_2)}\cdot \Big(1-\dfrac{2}{t(y_1-y_2)(d_1-d_2)}\Big)+
e^{t(y_1d_2+y_2d_1)}\cdot \Big(1+\dfrac{2}{t(y_1-y_2)(d_1-d_2)}\Big)\Big],
\end{align*}
which, taking into account the symmetry of $\hat{\Omega}$, yields (\ref{Its-Zub_s}). Integral
(\ref{Its-Zub_s1}) can be computed straightforward. $\quad \Box$
 \medskip

\textbf{Proof of Lemma \ref{l:phi}.}
Note that all non-zero moments of measure $d\eta$ can be expressed via
expectations of $v_j^{2s}:=|(V_j)_{12}|^{2s}$, $u_j^{2l}:=|(U_j)_{12}|^{2l}$.
In addition, according to Proposition~ \ref{p:Its-Z},
\begin{align*}
&\langle v_j^{2s}u_j^{2l}\rangle_{\eta_j}=12q_j^{-1}\int\limits_{0}^1
 v_j^{2s+1}u_j^{2l+1}e^{t_j(1-2v_j^2)(1-2u_j^2)/2-t_j/2}(1-2v_j^2)^2du_j\, dv_j\\
&= 24q_j^{-1}\int\limits_{0}^1du_j\int\limits_{0}^{1/\sqrt{2}}dv_j\,\,
 v_j^{2s+1}u_j^{2l+1}e^{t_j(1-2v_j^2)(1-2u_j^2)/2-t_j/2}(1-2v_j^2)^2\\
&=\dfrac{24q_j^{-1}}{W^4p_j^4}\int\limits_{0}^{p_jW/\sqrt{2}}v_j\, dv_j
\int\limits_{0}^{p_jW\sqrt{1-2v_j^2/p_j^2}}u_j\,du_j\,\,\Big(\dfrac{v_j}{Wp_j}\Big)^{2s}
\cdot\Big(\dfrac{u_j}{Wp_j(1-2v_j^2/W^2p_j^2)^{1/2}}\Big)^{2l}\\
&\times \exp\big\{-(a_+-a_-)^2(v_j^2+u_j^2)\big\}\cdot\Big(1-\dfrac{2v_j^2}{W^2p_j^2}\Big)\\
&=\langle\tilde{v}_j^{2s}\tilde{u}_j^{2l}\cdot\big(1-2\tilde{v}_j^2\big)\rangle_{\tilde{\eta}_j}
+O(e^{-C_1W^2}),
\end{align*}
where $\eta_j$, $q_j$ and $t_j$ are defined in (\ref{F_V}), and in the third line
we have changed $t_jv_j^2\to (a_+-a_-)^2v_j^2$, $t_j(1-2v_j^2)u_j^2\to (a_+-a_-)^2u_j^2$.

Now let $\mathbf{E}_k$ be the averaging with respect to the product of the measures
$d\tilde{\eta}_j$ for $j$ from $(-n+1)$ to $(-n+k)$ and the measures $d\eta_j$ for
$j$ from $(-n+k+1)$ to $n$. Thus, if
\[\Psi_{k_1,\ldots,k_s}=\prod_{j=1}^s S(R_{k_j}P_{-n}), \]
then it suffices to estimate
\[
\Big|\tilde{\Psi}_{k_1,\ldots,k_s}^0-\tilde{\Psi}_{k_1,\ldots,k_s}^{2n}\Big|\le e^{-cW^2}
\]
for $s\le p$, where
\[
\tilde{\Psi}_{k_1,\ldots,k_s}^i=\mathbf{E}_{i}\Big\{\Psi_{k_1,\ldots,k_s}
\prod\limits_{j=-n+1}^{-n+i}(1-2\tilde{v}_j^2)\Big\}.
\]
Note that
\[
\Big|\tilde{\Psi}_{k_1,\ldots,k_s}^0-\tilde{\Psi}_{k_1,\ldots,k_s}^{2n}\Big|
\le\sum\limits_{i=1}^{2n}\Big|\tilde{\Psi}_{k_1,\ldots,k_s}^{i-1}-\tilde{\Psi}_{k_1,\ldots,k_s}^{i}\Big|.
\]
In each summand we write for $\gamma=i-1,i$ (we assume that all $k_j\ge (-n+i)$)
\begin{align*}
&\tilde{\Psi}_{k_1,\ldots,k_s}^\gamma
=\mathbf{E}_{\gamma}\Big\{\prod_{j=1}^s
S\big(R_{-n+i-1}Q_{-n+i}(R_{-n+i}^*R_{k_j}P_{-n})\big)\prod\limits_{j=-n+1}^{-n+\gamma}
(1-2\tilde{v}_j^2)\Big\}\\&=
\mathbf{E}_{\gamma}\Big\{\prod_{j=1}^s \sum\limits_{l=2,4}|\sum_{\alpha,\alpha'=1,\ldots,4}
(R_{-n+i-1})_{1\alpha}(Q_{-n+i})_{\alpha\alpha'}
(R_{-n+i}^*R_{k_j}P_{-n}))_{\alpha'l}|^{2}\prod\limits_{j=-n+1}^{-n+\gamma}
(1-2\tilde{v}_j^2)\Big\}\\&
=\left\{
\begin{array}{ll}
\sum_{k,l=1}^{s+1} C_{k,l}\mathbf{E}_{\gamma}\{|(V_{-n+i})_{12}|^{2k}|(U_{-n+i})_{12}|^{2l}\},\quad \gamma=i-1,\\
\sum_{k,l=1}^{s+1} C_{k,l}\mathbf{E}_{\gamma}\{|(V_{-n+i})_{12}|^{2k}|(U_{-n+i})_{12}|^{2l}(1-2\tilde{v}_{-n+i}^2)\},\quad \gamma=i,
\end{array}\right.
\end{align*}
where the coefficients $C_{k,l}$ are the same for $\gamma=i$ and $\gamma=i-1$ and can be bounded by $C^{s}$,
since $|(R_{-n+i-1})_{1\alpha}|\le 1$ and $|(R_{-n+i}^*R_{k_j}P_{-n}))_{\alpha'l}|\le 1$, $l=2,4$.
Moreover, since
\begin{multline*}
|\mathbf{E}_{i-1}\{|(V_{-n+i})_{12}|^{2k}|(U_{-n+i})_{12}|^{2l}\}\\-
\mathbf{E}_{i}\{|(V_{-n+i})_{12}|^{2k}|(U_{-n+i})_{12}|^{2l}(1-2\tilde{v}_{-n+i}^2)\}|\le C^s k!l! e^{-CW^2},
\end{multline*}
we obtain
\[\Big|\tilde{\Psi}_{k_1,\ldots,k_s}^{0}-\tilde{\Psi}_{k_1,\ldots,k_s}^{2n}\Big|\le n C^p_1(p!)^2e^{-CW^2}\le n e^{C_2 (n\log n)/W}e^{-CW^2}=O(e^{-C_2W^2}).
\]
This yields Lemma \ref{l:phi}.
$\quad \Box$

{\bf Acknowledgements.} This research was sponsored by the grant of the Russian
Science Foundation (project 14-21-00035)

\end{document}